\def\@fnsymbol#1{\ensuremath{\ifcase#1\or \!\;\or \!\;\or \ddagger\or
   \mathsection\or \mathparagraph\or \|\or **\or \dagger\dagger
   \or \ddagger\ddagger \else\@ctrerr\fi}}
\colorlet{keywordcolor}{blue!70!black}
\colorlet{commentcolor}{green!50!black}
\newcounter{customAlgorithm}[section]
\renewcommand{\thecustomAlgorithm}{\thesection.\arabic{customAlgorithm}}
\crefname{equation}{eq.}{eqs.} % Don't always capitalize
\crefname{enumi}{}{} % No label for items (default)
\crefname{icase}{case}{cases}
\crefname{ipart}{part}{parts}
\crefname{iprop}{property}{properties}
\crefname{iinv}{invariant}{invariants}
\newcommand{\cA}{\mathscr{A}}
\DeclareMathOperator{\replace}{\mathsf{replace}}
\DeclareMathOperator{\decode}{\mathsf{decode}}
\DeclareMathOperator{\freq}{\mathsf{freq}}
\DeclareMathOperator{\OPT}{\mathsf{OPT}}
\DeclareMathOperator{\bpe}{\mathsf{BPE}}
\renewcommand{\alpha}{\upalpha}
\newcommand{\cP}{\mathscr{P}}
\newcommand{\cR}{\mathscr{R}}
\newcommand{\cB}{\mathscr{B}}
\newcommand{\connected}[1]{\def\temp{#1}\ifx\temp\empty\sim\else\overset{#1}{\sim}\fi}
\tikzset{
	point/.style={circle, fill, inner sep=1.5pt},
	smallpoint/.style={point, inner sep=1.2pt},
	tinypoint/.style={point, inner sep=1pt},
	hlbox/.style={fill, {white!90!black}},
	subrect/.style={draw, fill={white!80!cyan}}, % Sub-rectangle in some decomposition
	msrect/.style=subrect % Rectangle in merge sequence
}
\newtheorem{theorem}{Theorem}[section]
\newtheorem{lemma}[theorem]{Lemma}
\newtheorem{claim}[theorem]{Claim}
\theoremstyle{definition}
\author{L\'{a}szl\'{o} Kozma}
\author{Johannes Voderholzer}
\affil{Institut für Informatik, Freie Universit\"at Berlin, Germany}
\title{Theoretical Analysis of Byte-Pair Encoding\thanks{$^{*}$Supported by DFG Grant KO 6140/1-2. \newline Email: \href{mailto:laszlo.kozma@fu-berlin.de}{laszlo.kozma@fu-berlin.de}, \href{mailto:voderhoj00@zedat.fu-berlin.de}{voderhoj00@zedat.fu-berlin.de}}}
\date{}
\begin{document}

\maketitle

\begin{abstract}
Byte-Pair Encoding (BPE) is a widely used method for subword tokenization, with origins in grammar-based text compression. It is employed in a variety of language processing tasks such as machine translation or large language model (LLM) pretraining, to create a token dictionary of a prescribed size.  Most evaluations of BPE to date are empirical, and the reasons for its good practical performance are not well understood. %with its worst-case behaviour poorly understood. 

In this paper we focus on the optimization problem underlying BPE: finding a \emph{pair encoding} that achieves optimal compression utility. We show that this problem is APX-complete, indicating that it is unlikely to admit  a polynomial-time approximation scheme.  This answers, in a stronger form, a question recently raised by Zouhar et al.~\cite{zouhar2023formal}. 

On the positive side, we show that BPE approximates the compression utility of the optimal pair encoding to a worst-case factor between $0.333$ and $0.625$. Our results aim to explain the ongoing success of BPE and are, to our knowledge, the first rigorous guarantees on its compression utility that hold for all inputs.  
\end{abstract}

\section{Introduction}\label{sec1}

A common step in the modern NLP application pipeline is tokenization: given an input text, the task is to partition it into \emph{tokens}, i.e., frequently occurring consecutive groups of symbols. The main goal is to identify  semantically meaningful units (words or subwords) in order to facilitate higher level tasks (e.g., translation or text generation)~\cite{mielke2021between, ali2023tokenizer}. As this goal is difficult to directly optimize for, tokenization is usually solved heuristically, or formulated as a different but closely related task: \emph{data compression}. 
Indeed, the dictionary-encoding of tokens reduces text length; the amount of compression is easy to measure, and was found to be a good predictor of the quality of tokenization for downstream tasks, e.g., for translation accuracy~\cite{galle2019investigating}. It is thus natural to study tokenization with the proxy optimization goal of \emph{compression utility}.

Byte-Pair Encoding (BPE), introduced by Gage in 1994~\cite{gage1994new}, is a commonly used heuristic for tokenization. It proceeds by repeatedly identifying the \emph{most frequently occurring} pair of symbols and replacing all occurrences of this pair with a new symbol, thereby shortening the text.   The new symbols, together with the pairs they replace, are stored in a lookup-table, which allows the reconstruction of the original text. 
In typical applications, the number of new symbols (and thus the size of the lookup-table) is fixed upfront, and the goal is to achieve the best compression within this budget. The symbols of the resulting (shortened) text correspond to the tokens of the input. Figure~\ref{fig1} shows an example of the encoding of a text by BPE. %\jv{this is a great introduction imo. :)}

BPE has become a de-facto standard in NLP applications, widely employed in machine translation~\cite{sennrich2015neural, xu2021vocabulary, domingo2019much, galle2019investigating, gutierrez2023languages}
and in the preprocessing stage of training large language models~\cite{BrownMRSKDNSSAA20, liu2021robustly,touvron2023llama,radford2019language,le2023bloom, wu2023multimodal}\footnote{See also  \url{https://github.com/google/sentencepiece} and\\ \url{ https://github.com/openai/tiktoken}.}. 
Besides its effectiveness, the popularity of BPE is likely due to its simplicity and computational efficiency, when compared with more sophisticated or linguistically motivated methods (e.g., see~\cite{bostrom2020byte,schmidt2024tokenization}). A careful implementation of BPE has a total runtime that is linear in the input length, for an arbitrary number of replacement rounds. In addition, a BPE-encoded representation can support efficient random access and pattern-matching on the original text, which is important in some applications~\cite{shibata1999byte}.

Given the popularity and good empirical performance of BPE, there is a surprising lack of rigorous guarantees for the quality of its output. In this paper we study the problem of compressing a text (a string $s$ over some alphabet $\Sigma$) by successive encoding of pairs (strings of length two). Adopting the framework of approximation algorithms~\cite{approx_book} we study how well BPE, as a natural greedy heuristic, approximates this problem.  
Our optimization goal is to maximize \emph{compression utility}, i.e., the reduction in text length, within $k$ pair-encoding rounds, where $s$ and $k$ are given as the input (we precisely define this problem -- \emph{optimal pair encoding} -- later in this section). 

The problem formulation we use closely resembles the one recently introduced by Zouhar et al.~\cite{zouhar2023formal} for the same task. 
This abstract setting presents a challenging algorithm design problem of independent interest and allows a clean theoretical analysis of BPE. Note however, that we necessarily ignore some practical aspects and optimizations of BPE-variants (e.g., special treatment of whitespace and punctuation or language-specific rules~\cite{radford2019language, ali2023tokenizer}). 

An algorithm $\cA$ for optimal pair encoding has \emph{approximation ratio} $\alpha \leq 1$, if the compression utility of $\cA$ is at least $\alpha$ times the optimum for all inputs $(s,k)$. 
The greedy step of BPE is locally optimal, and thus, for $k=1$ it achieves optimal compression. For $k>1$, however, simple examples show that BPE may not achieve optimal compression (see Figure~\ref{fig1}). 

As our main complexity-result, we show that optimal pair encoding is \emph{APX-complete}. This means (informally) that no polynomial-time algorithm can approximate it to a factor arbitrarily close to $1$, unless P=NP. On the positive side, we show that BPE achieves an approximation ratio $\alpha$, with $0.333 < \alpha \leq 0.625$. 
We note that previously no constant-approximation guarantee was known for BPE or other algorithms. The question of whether optimal pair encoding is NP-complete was raised recently by Zouhar et al.~\cite{zouhar2023formal}; our result settles this question in a stronger form.

\begin{figure} 
\setcapindent{0em}
\captionsetup{width=.92\linewidth}
\begin{align*}
& \mathtt{aabaaaba \rightarrow XbXaba \rightarrow YXaba \rightarrow Zaba}\\
& \mathtt{aabaaaba \rightarrow aXaaXa \rightarrow YaYa \rightarrow ZZ}
\end{align*}
\caption{Input $s = \mathtt{aabaaaba}$ encoded by BPE merge sequence $(\mathtt{aa \rightarrow X,\, Xb \rightarrow Y,\, YX \rightarrow Z)}$ (above). An optimal encoding by the merge sequence $(\mathtt{ab \rightarrow X,\, aX \rightarrow Y,\, Ya \rightarrow Z})$ (below). \label{fig1} }
\end{figure}

Before precisely stating our results, we review some further related work and give a formal definition of the problem and the algorithms that we study.

\paragraph{Related work.} BPE has its origins in text compression, in particular, the class of  \emph{grammar-based} compression methods or \emph{macro schemes}, e.g., see~\cite{storer1982data, kieffer2000grammar, CharikarLLPPSS05, lohrey2012algorithmics} for surveys. (The encoding obtained by BPE can be seen as a restricted kind of context-free grammar or string straight-line program.) 
A compression method closely related to BPE is Re-Pair~\cite{larsson2000off}. Re-Pair differs from BPE in that its number of replacement rounds $k$ is not fixed; instead, it performs replacements as long as they achieve a saving (i.e., as long as some pair appears in at least two disjoint copies). Re-Pair is widely used, e.g., in bioinformatics, and several variants and practical improvements of it have been proposed~\cite{kim2024recursive, ganczorz2017improvements, gagie2019rpair, MRRepair}.

The central question of grammar-based compression is to find a minimal grammar that generates a given text. This task is known to be NP-hard, as well as hard to approximate~\cite{storer1982data, CharikarLLPPSS05} (by some constant factor). The best known approximation ratio for the grammar-based compression of an input of length $n$ is $O(\log{n})$~\cite{rytter2003application, jez2014really}. 
The approximation ratio of Re-Pair is $O((n/\log{n})^{2/3})$ and $\Omega(\log{n}/\log\log{n})$~\cite{CharikarLLPPSS05, bannai2020smallest}. Note that these results relate the size of the obtained grammar to that of the minimal grammar, 
where the latter can be of a more general kind than what Re-Pair (or BPE) can produce.

Navarro and Russo~\cite{navarro2008re} show a different kind of theoretical guarantee for Re-Pair, namely that its cost approximates the order-$t$ entropy of the input, for a certain range of $t$ and alphabet size. 
Furuya et al.~\cite{MRRepair} bound the gap between runs of Re-Pair with different tie-breaking. These results use a different cost measure (compressed length versus reduction, which will be discussed in deail later) and are not directly comparable to ours; nevertheless, a construction from~\cite{MRRepair} is also useful in our context, as shown below.

Closest to our work is the recent paper of Zouhar et al.~\cite{zouhar2023formal}, which initiated the formal study of BPE and optimal pair encoding that we also largely follow in this paper, apart from small notational differences and a more general problem formulation.   
Using the theory of submodular functions, they relate the approximation ratio of BPE to a certain parameter (total backward curvature) that depends on the unknown optimum. Zouhar et al.\ also observe an empirical bound on this quantity, however, without giving any worst-case guarantees.

\paragraph{Problem definition.} 

We consider strings (sequences of symbols) over some alphabet $\Sigma$ and denote concatenation of strings $a$ and $b$ by $a \cdot b$, omitting the $\cdot$ when clear from context.  
We denote the \emph{length} of a string $s$ by $|s|$, the $i$-th symbol of $s$ by $s[i]$, and the substring $s[i] \cdot s[{i+1}] \cdots  s[j]$ by $s[i:j]$.  
We thus have $s = s[{1:|s|}]$.
 
A \emph{replacement rule} is a function  $\replace_{x \rightarrow y}$ that transforms a string $s$ by replacing \emph{all} occurrences of the string $x$ in $s$ with the string $y$. Formally, $\replace_{x \rightarrow y}(s) = s$ if $s$ does not contain $x$, and otherwise $\replace_{x \rightarrow y}(s) = s[{1:i}] \cdot y \cdot \replace_{x \rightarrow y}(s[{i+|x|+1:|s|}])$, where $i$ is the smallest index for which $s[{i+1:i+|x|}] = x$.

A sequence of replacement rules $\cR = (\cR_1, \dots, \cR_k)$ with $\cR_i = \replace_{a_i b_i\rightarrow c_i}$, where $a_i, b_i, c_i$ are symbols, is called a \emph{merge sequence} of length $k$. Denoting $s' = \left(\cR_k\circ \cdots \circ \cR_1\right)(s)$, where $\circ$ is the function composition, we refer to $|s'|$ as the \emph{compressed length}, and $|s|-|s'|$ as the \emph{utility} of $\cR$ for $s$. In words, $s'$ is obtained from $s$ by applying the sequence of replacement rules $\cR_1, \dots, \cR_k$.
We refer to the $i$-th step, i.e., the application of $\cR_i$ as the $i$-th \emph{merge}. We sometimes use the term \emph{full merge} to emphasize that no copies of $a_i b_i$ remain after the operation.  

Given the resulting encoded (compressed) string $s'$, we can recover $s$ by applying the sequence of reverse transformations $\cR' = (\cR'_k, \dots, \cR'_1)$ to $s'$, with $\cR'_i = \replace_{c_i \rightarrow a_i b_i}$. 
Notice that the symbols $c_i$, for all $i \in [k]$, can be assumed w.l.o.g., to be new, i.e., not appearing in $s$ or in $\left(\cR_j\circ \cdots \circ \cR_1\right)(s)$ for $j < i$. Indeed, if $c_i$ already appears in the string, then the replacement $a_i b_i \rightarrow c_i$ may not be unambiguously reversible. 

\medskip

We can now formulate our main optimization problems. Given a string $s$ and an integer $k>0$, find a merge sequence $\cR$ of length $k$, of maximal utility for $s$ (or equivalently, of minimal compressed length). 
We denote this optimal utility as $\OPT^m(s,k)$, and call the task of computing it the \textbf{optimal merge sequence (OMS) problem}.\footnote{Apart from slightly different notation that is more convenient for our arguments, the OMS problem is identical to the problem defined by Zouhar et al.~\cite{zouhar2023formal}.} Note that maximizing compression utility and minimizing compressed length are equivalent for exact computation, but not necessarily for approximability.

\medskip

We also define a more general optimization problem where we do not require to replace \emph{every occurrence} of a pair. 
Formally, a \emph{partial replacement rule} $\cR_i^* = \replace^*_{a_i b_i \rightarrow c_i}$ can be any function that satisfies $\replace_{c_i \rightarrow a_i b_i}(\replace^*_{a_i b_i \rightarrow c_i}(s)) = s$ for all $s$. In words, $\cR_i^*$ replaces \emph{some} occurrences of $a_i b_i$ with $c_i$. A sequence of partial replacement functions $\cR^* = (\cR_1^*, \dots, \cR_k^*)$ is a \emph{partial merge sequence}. Denoting $s' = \left(\cR_k^* \circ \cdots \circ \cR_1^*\right)(s)$, we define utility and compressed length of $\cR^*$ analogously to merge sequences. Notice that $s$ can be recovered from $s'$ identically to the case of merge sequences. 

The \textbf{optimal pair encoding (OPE) problem} asks, given a string $s$ and an integer $k > 0$, to find a partial merge sequence $\cR^*$ of length $k$, of maximal utility for $s$. We denote this optimal utility as $\OPT(s,k)$.  

While the OMS problem is perhaps more natural, OPE is more general, and as shown in Figure~\ref{fig2}, it can indeed be stronger (i.e., it is sometimes worth not merging every occurrence of a pair). Most of our results in this paper apply to both problems.  

\begin{figure}
\setcapindent{0em}
\captionsetup{width=.92\linewidth}
\begin{align*}
  \mathtt{abcd\,|\,bc\,|\,bcda\,|\,cd\,|\,cdab\,|\,da\,|\,dabc\,|\,ab} & \mathtt{~\rightarrow~ XY\,|\,Z\,|\,ZT\,|\,Y\,|\,YX\,|\,T\,|\,TZ\,|\,X}\\
  \mathtt{abcd\,|\,bc\,|\,bcda\,|\,cd\,|\,cdab\,|\,da\,|\,dabc\,|\,ab} & \mathtt{ ~\rightarrow~ XZ\,|\,Y\,|\,YT\,|\,Z\,|\,ZX\,|\,T\,|\,dXc\,|\,X}
\end{align*}

\caption{Input $s = \mathtt{abcd\,|\,bc\,|\,bcda\,|\,cd\,|\,cdab\,|\,da\,|\,dabc\,|\,ab}$, where $|$ denotes a distinct symbol for each occurrence. An optimal OPE encoding of $s$ (above) with utility $\OPT(s,4) = 12$. An optimal OMS encoding of $s$ (below) with utility $\OPT^{m}(s,4)=11$. The OMS solution is obtained via the merge sequence $(\mathtt{ab \rightarrow X,~ bc \rightarrow Y,~ cd \rightarrow Z,~ da \rightarrow T})$. \label{fig2}}
\end{figure}

\paragraph{Byte-pair encoding (BPE).} BPE solves both the OPE and OMS problem as follows. Starting with the input string $s$, it performs $k$ \emph{locally optimal} full merge steps, always choosing a pair whose replacement maximizes compression utility. 

Formally, for input $(s,k)$, we output $\cR = (\cR_1, \dots, \cR_k)$, where $\cR_i = \replace_{a_i b_i \rightarrow c_i}$. Denoting $s^{(0)} = s$, and $s^{(i)} = \cR_i(s^{(i-1)})$ for $i \in [k]$, each $c_i$ is a new symbol, i.e., not occurring in $s^{(j)}$ with $j<i$, and for $i=1,\dots,k$, the pair $a_i b_i$ is chosen so that $|\cR_i(s^{(i-1)})|$ is minimal.  

With careful data structuring, identifying $a_i b_i$ and performing $\cR_i$ can be done in linear total time over all $k$ merge steps, e.g., see~\cite{shibata1999byte}.  
In this paper, we ignore such implementation details and focus on the total utility of BPE, i.e., $|s| - |s'|$, where $s' = s^{(k)}$. We denote this quantity as $\bpe(s,k)$. Note that clearly $\bpe(s,k) \leq \OPT^m(s,k) \leq \OPT(s,k)$.

\medskip

We remark that a number of choices allow for small variation in the definition of BPE (and partly of $\OPT^m$): (1) when choosing a pair to replace, in case of a tie in utility, we pick the pair that appears first; (2) the utility of a chosen pair equals its number of occurrences, except for the case of overlapping pairs (e.g., the pair $\mathtt{aa}$ appears twice in $\mathtt{aaa}$, but only one of its copies can be replaced) -- one could also decide based on the number of occurrences; and (3) in case of such overlapping pairs, we do the replacements left-to-right, e.g., $\mathtt{aaa \rightarrow Xa}$, whereas $\mathtt{aaa \rightarrow aX}$ would also be a valid choice. 

Overall, the effect of these design decisions appears negligible. Our results hold regardless of the tie-breaking strategy for (1). As for (2) and (3), the implementation we chose appears better motivated than the alternatives, but our results can easily be adapted to the other variants; see also~\cite{larsson2000off, zouhar2023formal} for discussion. 

\paragraph{Our results.} 

As defined, OPE and OMS are  natural string compression problems (maximizing compression utility), and BPE is a straightforward greedy heuristic for both. Surprisingly, no worst-case guarantee is known for BPE or for any other algorithm solving OPE or OMS. 

Zouhar et al.~\cite{zouhar2023formal} formulated the OMS problem (under slightly different terminology), and raised the question whether its exact decision problem is NP-hard. 
Our first result, shown in \S\,\ref{sec2} answers this question in a stronger form. We show that both OMS and OPE are in fact APX-complete, ruling out the existence of a \emph{polynomial time approximation scheme} (PTAS), unless P=NP. 
\begin{theorem}\label{thm1}
OPE and OMS are APX-complete.
\end{theorem}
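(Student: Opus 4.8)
\medskip
\noindent\emph{Proof idea.}
The statement splits into membership in APX and APX-hardness. Membership is immediate from our later analysis of BPE: that analysis gives $\bpe(s,k)\ge\tfrac{1}{3}\,\OPT(s,k)$, and since $\bpe(s,k)\le\OPT^m(s,k)\le\OPT(s,k)$ it also yields $\bpe(s,k)\ge\tfrac{1}{3}\,\OPT^m(s,k)$; as BPE runs in polynomial (indeed linear) time, it is a constant-factor approximation algorithm for both OPE and OMS, so both lie in APX. The content of the theorem is therefore APX-hardness, which I would prove by an L-reduction (equivalently, an AP-reduction) from a problem already known to be APX-complete.

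For the source problem I would pick one whose ``conflict structure'' pair encoding can faithfully imitate, namely a bounded-degree or bounded-occurrence problem such as Max-Cut on cubic graphs, Vertex Cover / Maximum $k$-Vertex-Cover on bounded-degree graphs, or a bounded-occurrence variant of Max-SAT --- all APX-complete. Two gadget primitives drive the construction. First, a \emph{selector}: a short substring repeated a controlled number $M$ of times, with a private separator symbol between copies (as in Figures~\ref{fig1} and~\ref{fig2}), designed so that an optimal solution must commit, for each variable/vertex, to exactly one of two designated bigrams, receiving the same bonus $M$ either way. Second, a \emph{capping gadget}: a short substring such as $xyz$ in which the two bigrams $xy$ and $yz$ (encoding the two relevant literals, or the two edges of a ``cherry'' of the input graph) \emph{overlap} on the shared symbol $y$, so that merging either of them gives utility exactly $1$ from that gadget and merging both gives no more. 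This overlap is exactly the mechanism illustrated by Figure~\ref{fig2}, which is also why a graph is conveniently encoded in the style of Figure~\ref{fig2}, with vertices as symbols and edges as bigrams that automatically overlap when they share a vertex. Because distinct gadget-occurrences are made non-interfering by private separators, the utility of a length-$k$ merge sequence decomposes as a fixed additive term plus a term equal to the objective value of the combinatorial solution obtained by reading off which designated bigrams were merged; with $k$ set to the number of variables/vertices, the optimum of the constructed instance equals the optimum of the source instance up to that fixed term.

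It then remains to verify the two L-reduction inequalities. The first, $\OPT^m(s,k)\le\beta\cdot\OPT_\Pi(x)$ for some constant $\beta$, forces the fixed additive term --- hence the multiplicities $M$ --- to be only a constant factor above $\OPT_\Pi(x)$; this is a balancing act, and it is precisely why one works with bounded-degree/bounded-occurrence versions of $\Pi$, where $\OPT_\Pi(x)=\Theta(|x|)$. The second, that any merge sequence of utility $U$ can be converted in polynomial time into a $\Pi$-solution of value at least $U$ minus a quantity bounded by a constant times the slack $\OPT^m(s,k)-U$, rests on a ``cleanup'' argument: any locally wasteful merge --- skipping a variable's selector, merging both overlapping bigrams of a capping gadget, or introducing an undesignated bigram --- can be removed or rerouted without decreasing utility, so without loss of generality the merge sequence has the canonical form that maps to a feasible $\Pi$-solution. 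Finally, to get the statement for both OPE and OMS from a single construction, I would check that the constructed instances satisfy $\OPT(s,k)=\OPT^m(s,k)$: the capping gadgets are designed so that partially merging a bigram is never strictly better than fully merging it or not merging it, so partial merge sequences gain nothing, and the same reduction then certifies APX-hardness of OPE as well.

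The main obstacle is the gadget design together with its correctness proof, because three features of OMS/OPE fight against a clean reduction: full merges are \emph{forced} in OMS, so merge order matters and a merge can destroy later opportunities; the only ``conflict'' pair encoding provides is an overlap of two length-two windows at a single symbol, so any constraint of arity larger than two, or with a non-monotone reward, must be simulated by composing many small overlap-gadgets; and a merge always has nonnegative utility, so one cannot forbid a bad choice but only make it suboptimal, which is why the reduction must be kept ``supply-constrained'' and the L-reduction constants are delicate. Making a single construction satisfy all of this with a constant distortion, uniformly over inputs and simultaneously for OPE and OMS, is where essentially all the work lies.
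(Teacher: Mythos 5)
Your high-level plan coincides with the paper's: membership in APX via the BPE $1/3$-guarantee of \S\,\ref{sec3}, and APX-hardness via an L-reduction from Max-Cut on cubic graphs, using per-vertex ``selector'' gadgets isolated by private separator symbols, a polynomial-time cleanup argument showing wasteful merges can be rerouted, and an extension to OPE by arguing that partial merges cannot help. However, the proposal stops exactly where the work is, and the one gadget primitive you do specify cannot express the constraint that Max-Cut requires. Your capping gadget $xyz$ rewards ``at least one of $xy$, $yz$ merged'' --- a monotone OR --- whereas a cut edge must be rewarded exactly when its two endpoints commit to \emph{different} sides, i.e.\ you need a disagreement (XOR-type) detector built only from overlapping length-two windows. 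Moreover, with $k$ equal to the number of vertices and every merge spent on a selector, no merge remains to collect the per-edge rewards. The paper resolves both problems simultaneously: in the edge gadget $\#\ell_i\#\#\ell_j\#\,|\,\#\ell_j\#\#\ell_i\#\,|$, after the $n$ vertex merges (one of $\#\ell_i$, $\ell_i\#$ per vertex) a $\#\#$ pair survives exactly when the endpoints chose opposite orientations, and a single extra global merge of $\#\#$ (hence $k=n+1$, not $n$) harvests all surviving pairs at once; the $20n$ padding copies of $\#\#\,|$ and the four copies of each $s_i$ make the designated pairs so frequent that any near-optimal sequence must be ``well-formed'' (Claim~\ref{claim_wf}), which is what makes the cleanup and the constants $\alpha=162$, $\beta=1$ go through.

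This missing mechanism is not a detail you could defer: with your OR-type gadget the natural source problem would be a coverage problem such as Max $k$-Vertex Cover, for which you would then separately need APX-hardness on bounded-degree instances and a fresh calibration of multiplicities so that undesignated merges (pairs containing separators, or pairs built from newly created symbols) are provably suboptimal --- the frequency bookkeeping that in the paper caps every ``bad'' merge at utility $10$ against a guaranteed $11$ for a designated one. Likewise, your claim that for OPE ``partial merge sequences gain nothing,'' i.e.\ $\OPT(s,k)=\OPT^m(s,k)$ on the constructed instances, is the right statement but needs a genuine argument; in the paper it is a multi-step transformation (deduplicating merges, reordering so $\#\#$ comes last, completing partial merges to full ones) carried out in Appendix~\ref{appA}. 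So the frame of your proposal matches the paper, but the core idea --- the concrete cut-detection gadget with its extra harvesting merge and the quantitative balancing that powers the L-reduction --- is absent, and the gadget you propose in its place would not suffice for Max-Cut.
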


The fact that the number $k$ of merge-steps is part of the input is crucial; for fixed values of $k$ a polynomial-time exact algorithm is easy to derive. The APX-hardness also holds for the problem of minimizing compressed length, as well as for some other variants of the problem, as discussed in \S\,\ref{sec2}.

As for BPE, we analyze its approximation ratio for compression utility, showing in \S\,\ref{sec3}:

\begin{theorem}\label{thm2}
BPE approximates OPE with a ratio of $\alpha$, where $0.333 < \alpha \leq 0.625$. 
\end{theorem}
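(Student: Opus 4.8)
The plan is to prove the two halves of Theorem~\ref{thm2} separately: the lower bound $\bpe(s,k) > \tfrac13\,\OPT(s,k)$ for every input, and the upper bound via an explicit input on which BPE attains a ratio of $5/8$.

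\textbf{Lower bound.} I would first isolate two structural facts. (i) \emph{Monotonicity of BPE's gains}: writing $g_i := |s^{(i-1)}| - |s^{(i)}|$ for the utility of BPE's $i$-th merge, the sequence satisfies $g_1 \ge g_2 \ge \cdots \ge g_k$; indeed, merging a pair $ab\to c$ can only decrease the number of (non-overlapping) occurrences of any old pair, while any pair involving the fresh symbol $c$ occurs at most as often as $ab$ did. (ii) The same observation applied to an \emph{arbitrary} partial merge sequence shows, by induction on the step index, that \emph{every single (partial) merge} — whether of BPE or of the optimum — has utility at most $g := g_1 = \bpe(s,1)$, the maximum pair-frequency of $s$; in particular $\OPT(s,k)\le k\,g$. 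These give $\bpe(s,k)\ge g \ge \OPT(s,k)/k$, which is far too weak, so the substance is a recursive comparison.

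The heart of the argument is a reduction lemma of the shape
$$\OPT(s,k)\ \le\ 3g\ +\ \OPT\bigl(s^{(1)},\,k-1\bigr),$$
i.e., after BPE performs its first merge, the residual instance still admits a solution whose utility falls short of the original optimum by at most three times BPE's first gain. Granting this, induction on $k$ finishes: $\bpe(s,k) = g + \bpe(s^{(1)},k-1) \ge g + \tfrac13\OPT(s^{(1)},k-1) \ge g + \tfrac13\bigl(\OPT(s,k)-3g\bigr) = \tfrac13\OPT(s,k)$, with the base case $k=1$ being exactly the local optimality of BPE ($\bpe(s,1)=\OPT(s,1)$). To prove the reduction lemma I would take an optimal partial merge sequence $\cR^*$ for $(s,k)$ and surgically convert it into a length-$(k-1)$ partial merge sequence for $s^{(1)}$: the set $P$ of positions of $s$ rewritten by BPE's first merge has size at most $g$, and one charges to $O(g)$ the total interaction between $\cR^*$ and $P$ — the occurrences of $ab$ that $\cR^*$ itself would merge, plus the ``corrupted'' left/right contexts of each rewritten position — and then deletes one now-redundant rule from $\cR^*$. \textbf{This surgery, together with pinning down the right constant, is the main obstacle:} $\cR^*$ may touch the positions of $P$ with several rules and in a nested fashion, and because these are \emph{partial} merges the bookkeeping of which occurrences actually survive each step is delicate; I expect the honest accounting to force the constant $3$ here, which is consistent with the true ratio lying strictly below $5/8$.

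\textbf{Upper bound.} I would exhibit an explicit input $(s,k)$ — or a family parameterized by a size $n$ with ratio tending to $5/8$ — engineered so that BPE's greedy first choice is locally optimal but globally wasteful: the first merge seizes a frequent pair that cannot be profitably extended, so that by monotonicity all subsequent BPE gains collapse, whereas the optimum forgoes that pair in favour of one long productive chain of merges (in the spirit of Figures~\ref{fig1} and~\ref{fig2}, and reminiscent of the construction of~\cite{MRRepair}). One then reads off $\bpe(s,k)$ from the monotone gain sequence, exhibits a concrete (partial) merge sequence certifying the claimed value of $\OPT(s,k)$, and verifies that the ratio equals (or approaches) $5/8$. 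The two explicit computations are routine; the real effort is designing the gadget so that the ratio is exactly $5/8$ rather than the easily-attained $2/3$ of Figure~\ref{fig1}, which is where the amplification in the repetitive structure of $s$ has to be tuned.
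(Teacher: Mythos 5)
Your plan has the right overall shape (an explicit bad instance for the upper bound, a per-step charging argument for the lower bound), but the decisive step of the lower bound --- the reduction lemma $\OPT(s,k)\le 3g+\OPT\bigl(s^{(1)},k-1\bigr)$ --- is exactly what you do not prove, and the surgery you sketch does not obviously deliver it. The interaction between BPE's first merge and an optimal partial merge sequence $\cR^*$ is not local: if a fused occurrence of $ab$ at positions $p,p+1$ is split by $\cR^*$ between two different tokens, you cannot simply discard the at most three glue operations at the junctions around $p$ and $p+1$ and keep the rest. Every later replacement of $\cR^*$ that builds on a discarded one at that location also becomes infeasible, because a rule merges a specific pair of \emph{symbols} and the required symbol is no longer produced there; the natural repair, re-routing through the new symbol $c$, changes the identity of the merged pair, so a rule acting at both fused and unfused occurrences would have to be split into two rules, and your budget of $k-1$ rules (you freed at most one) cannot absorb such splits. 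So the honest accounting is not ``at most $3$ per fused occurrence''; it is not even clear that the recursion lemma is true as stated, and proving something like it needs an idea the sketch does not contain. (Your preliminary facts --- monotone gains, every single merge having utility at most $g=\bpe(s,1)$, and the base case $\bpe(s,1)=\OPT(s,1)$ --- are correct but are the easy part; note also that your claim (ii) already requires the decode-and-charge argument.)

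The paper sidesteps this surgery entirely by a relaxation: it defines $k$-packings of the \emph{original} string, shows $\OPT(s,k)\le P_k(s)$ by charging each replacement of an arbitrary partial merge sequence to a distinct index of $s$ (Lemma~\ref{lem_freq}), and then proves $\bpe(s,k)\ge P_k(s)/3$ by comparing the packing $\cB_1\cup\cdots\cup\cB_k$ derived from BPE's own merges against a fixed optimal packing $\cP$, deleting at most $3|\cB_i|$ elements of $\cP$ at BPE's $i$-th step while maintaining two invariants (all of $\cP_j$, $j\le i$, emptied; no surviving index of $\cP$ adjacent to $\cB_j$, $j\le i$). No feasible merge sequence on the residual string $s^{(1)}$ is ever constructed, which is precisely what makes the factor $3$ provable. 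For the upper bound you also stop short of the actual gadget: the theorem needs an explicit family, and the paper's is $s=\mathtt{abaacaaba\,|\,aca}$ with $k=4$ (BPE earns $5$ after its first merge $\mathtt{aa}$, while an alternative full merge sequence earns $8$), amplified by concatenating $t$ copies of $s\cdot\#$ with a trailing $\mathtt{aa}$ so that BPE's first merge is forced regardless of tie-breaking, giving a ratio tending to $5/8$. As it stands, both halves of your proposal have genuine gaps, the unproven reduction lemma being the substantive one.
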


Closing this gap is an intriguing open question. Note that the result also implies an approximation of OMS with the same or better ratio, as well as that $\OPT$ and $\OPT^{m}$ are within a constant factor of each other. Unlike the hardness result, this guarantee does not transfer to the dual optimization problem of minimizing compressed length. In particular, we show in \S\,\ref{sec4} that BPE cannot achieve a constant approximation for this measure. 

\begin{theorem}\label{thm3}
The approximation ratio of BPE for compression length in OPE or OMS is $\Omega(n)$. 
\end{theorem}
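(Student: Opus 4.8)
The plan is to prove Theorem~\ref{thm3} by exhibiting, for arbitrarily large $n$, an input $(s,k)$ with $|s|=n$ on which the optimal compressed length (for OMS, hence also for OPE) is $O(1)$, while the string output by BPE has length $\Omega(n)$. This suffices: since performing $k\ge 1$ merges only shortens the string, every algorithm's compressed length is at most $n$ and the optimum is at least $1$, so such a family pins the approximation ratio for compressed length at $\Theta(n)$ and shows the bound is tight.

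The construction idea is to hide a fully compressible hierarchical structure behind a greedy trap. I would build $s$ from $\Theta(\log n)$ nested levels so that: (i)~there is a merge sequence of length $k=\Theta(\log n)$ collapsing $s$ to $O(1)$ symbols by processing the levels bottom-up; but (ii)~at every level the pair that this sequence merges is not globally most frequent --- a competing pair of at least equal multiplicity straddles that level's block boundaries, and since BPE performs only \emph{full} merges, replacing it permanently mis-aligns the hierarchy. I would choose the levels so that this first mis-aligning full merge, together with all merges BPE is subsequently forced to make, has total utility at most $(1-\epsilon)\,|s|$ for a fixed $\epsilon>0$: after the mis-alignment the working string breaks into $\Omega(|s|)$ short fragments between which no pair recurs with large multiplicity, so the utilities of the remaining merges decay quickly and BPE halts with $\ge \epsilon\,|s|$ symbols. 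The optimal sequence instead merges the slightly-less-attractive aligned pair at each level, keeps the hierarchy intact, and reaches constant length. A construction exhibiting an analogous greedy-mis-alignment phenomenon for differently tie-broken runs of Re-Pair appears in~\cite{MRRepair} and could be adapted here.

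The argument would split into three parts. First, define $s$ explicitly and verify full compressibility: give the length-$k$ merge sequence and check, by induction on the level, that each of its full merges acts only on the intended occurrences, i.e.\ that no spurious copies of the merged pair arise across block boundaries. Second, trace BPE's deterministic execution on $(s,k)$: show that step~$1$ performs the mis-aligning merge (it is most frequent, or is picked by the first-occurrence tie-break), and then prove by induction on the step index that every subsequent working string still has $\Omega(|s|)$ symbols, bounding the utilities available at each step, so that BPE's total utility stays below $(1-\epsilon)|s|$. Third, combine the two estimates. To make the lower bound independent of the tie-breaking convention (as for the other results), one would perturb the block multiplicities so that the mis-aligning pair is \emph{strictly} most frequent at every step of BPE's run.

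The main obstacle is the second part: controlling how the multiplicities of all pairs evolve under BPE's full merges precisely enough to (a)~pin down BPE's choice at each step, and (b)~certify that no high-utility merge ever reappears once the hierarchy has been shattered --- in particular, that BPE never re-discovers a compressible sub-instance and so cannot recover a constant fraction of $|s|$ with the merges it has left. Designing the nested levels so that BPE's greedy trajectory is at once analyzable in closed form and provably trapped is the technically heaviest piece.
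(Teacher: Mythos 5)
Your overall strategy is the right one (and matches the paper's in spirit): exhibit a family where the optimum compresses to $O(1)$ symbols while BPE, trapped by a greedy full merge, stalls at $\Omega(n)$ symbols; you even point to the construction of~\cite{MRRepair}, which is exactly what the paper adapts. However, as written this is a plan, not a proof: the entire technical content --- the explicit string, the verification that the intended merge sequence reaches constant length, and the step-by-step control of BPE's multiplicities showing it cannot recover --- is deferred, and you yourself flag it as ``the technically heaviest piece.'' That missing piece is the theorem; without a concrete $s$ and an analysis of BPE's run on it, nothing is established.

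Moreover, your specific design choice ($\Theta(\log n)$ nested levels, so that OPT reaches $O(1)$ with $k=\Theta(\log n)$ merges) creates a difficulty you do not resolve: a string compressible to constant size in $O(\log n)$ full merges must be essentially of the form $w^{2^j}$-like, i.e.\ highly periodic, and in such strings the \emph{aligned} bottom-level pair is typically already the most frequent, so arranging a strictly more frequent mis-aligning competitor at every level, while preserving full hierarchical compressibility and while tracking all of BPE's subsequent choices, is genuinely delicate. The paper avoids this entirely by taking $k$ large ($k=8t-1=\Theta(n)$): in $s=\prod_i x_i\mathtt{aa}y_i\cdot\prod_i|\,x_i\mathtt{a}\cdot\prod_i|\,\mathtt{a}y_i$ the pair $\mathtt{aa}$ occurs $t$ times and every other pair at most twice, so BPE's first merge is forced and afterwards \emph{every} pair occurs at most once, making each remaining step gain exactly $1$; meanwhile the optimum first merges all $x_i\mathtt{a}$ and $\mathtt{a}y_i$ and then grinds the length-$6t$ remainder down to a single symbol, giving compressed lengths $t+2$ versus $1$. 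If you pursue your route, you must either supply such a construction with a complete induction on BPE's steps, or simplify to a one-shot trap with large $k$ as above; as it stands the argument has a gap precisely where the theorem lives.
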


While our main focus is on the BPE algorithm, we find the OPE optimization problem of independent interest. We give in \S\,\ref{sec5} a simple algorithm we call EvenOdd, that achieves an approximation ratio of $0.5$. We stress that despite this guarantee, on most inputs BPE likely behaves better than EvenOdd, which should be seen as a proof of concept. 

\begin{theorem}\label{thm4}
EvenOdd is a $0.5$-approximation for OPE. 
\end{theorem}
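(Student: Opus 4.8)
The plan is to pin down the EvenOdd algorithm, show its output is a feasible partial merge sequence of utility $\max(U^{0},U^{1})$ for a pair of quantities defined below, and then prove $\OPT(s,k)\le U^{0}+U^{1}$, which gives ratio $\ge\tfrac12$. Concretely, I would define EvenOdd on input $(s,k)$ with $|s|=n$ as follows: look at the two tilings of $s$ into consecutive, pairwise disjoint length-$2$ blocks — tiling $0$ is $s[1:2],s[3:4],\dots$ and tiling $1$ is $s[2:3],s[4:5],\dots$ (each leaving at most one uncovered symbol at an end). In tiling $o$ every block is a symbol pair; letting $f^{o}_{1}\ge f^{o}_{2}\ge\cdots$ be the multiplicities of the distinct block-types, set $U^{o}=\sum_{j\le k}f^{o}_{j}$ (over as many terms as exist). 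EvenOdd fixes $o^{*}$ maximizing $U^{o}$, and in round $j$ introduces a fresh symbol $c_{j}$ that replaces exactly the block occurrences of the $j$-th most frequent type of tiling $o^{*}$ (padding any leftover rounds with identity replacements, which are valid since each $c_{j}$ is new). Because the blocks of a fixed tiling are pairwise disjoint, these occurrences never overlap and replacing one type never destroys a block occurrence of another, so every round achieves its full multiplicity; the output is reversible, has length $k$, and its total utility is exactly $\max(U^{0},U^{1})$.

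For the upper bound, fix an optimal partial merge sequence $\cR^{*}=(\cR^{*}_{1},\dots,\cR^{*}_{k})$, where $\cR^{*}_{i}$ replaces some occurrences of $a_{i}b_{i}$ by $c_{i}$, and track positions in $s$. Writing $\mathrm{exp}(x)$ for the substring of $s$ that a symbol $x$ stands for, after any prefix of rounds the current string corresponds to a left-to-right parsing of $s$ into phrases, with consecutive symbols matching consecutive phrases. Each individual replacement of $a_{i}b_{i}$ in round $i$ fuses two adjacent phrases $\mathrm{exp}(a_{i}),\mathrm{exp}(b_{i})$, turning a single \emph{edge} $(p,p{+}1)$ of $s$ (the boundary between them) from a phrase boundary into an interior edge, where it stays for good. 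These interior-made edges are pairwise distinct over all rounds and occurrences, and their number is $n-|s'|=\OPT(s,k)$. Two facts finish it. First, the symbol pair of $s$ at such an edge equals $(\mathrm{last}(\mathrm{exp}(a_{i})),\mathrm{first}(\mathrm{exp}(b_{i})))$, which depends only on the round $i$, so at most $k$ distinct symbol pairs of $s$ occur among all these edges. Second, an edge $(p,p{+}1)$ is a block of tiling $0$ precisely when $p$ is odd and a block of tiling $1$ precisely when $p$ is even, so the $\OPT(s,k)$ interior edges split into a tiling-$0$ part and a tiling-$1$ part. In the tiling-$o$ part at most $k$ symbol pairs occur, and for each pair $\tau$ the number of interior edges equal to $\tau$ is at most the number of tiling-$o$ blocks equal to $\tau$ (distinct edges give distinct blocks), so this part has size at most the sum of the multiplicities of at most $k$ block-types of tiling $o$, i.e.\ at most $U^{o}$. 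Adding over $o\in\{0,1\}$ yields $\OPT(s,k)\le U^{0}+U^{1}\le 2\max(U^{0},U^{1})$, which is twice EvenOdd's utility.

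The step needing the most care is the bookkeeping behind the first fact: one must verify that partial replacements keep the correspondence ``symbol $\leftrightarrow$ substring of $s$'' consistent (so that $\mathrm{exp}(a_{i}),\mathrm{exp}(b_{i})$ are well defined and independent of which occurrence is replaced), that a single pair-replacement makes exactly one new edge interior, and that no edge is double-counted; the remainder is a pure counting argument. Finally, the constant $0.5$ is essentially optimal for EvenOdd: for $s=\mathtt{a}^{n}$ and $k=\Theta(\log n)$ one has $\OPT(s,k)=n-1$, whereas EvenOdd achieves only $\floor{n/2}$.
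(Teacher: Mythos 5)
Your proposal is correct, but be aware that the algorithm you analyze is not the paper's EvenOdd: the paper's algorithm takes the $k$ globally most frequent pairs (occurrences counted with overlaps), lists all $N=F_k(s)$ of their occurrence indices in sorted order, and sparsifies by dropping even-\emph{ranked} indices that neighbor an adjacent one, keeping at least $N/2$ non-overlapping occurrences; your algorithm instead works with the two offset tilings of $s$ into disjoint length-two blocks and merges the top $k$ block types of the better tiling. Both are legitimate $0.5$-approximations, and your proof of this is sound. The underlying mechanism is the same in both cases: a charge of each unit of utility of the optimal partial merge sequence to a distinct ``gluing'' index of $s$, with the observation that within one round all charged indices carry the same symbol pair of $s$, so at most $k$ pair types are charged overall. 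The paper packages this once as Lemma~3.2 (utility $\le P_k(s)\le F_k(s)$, the $k$-packing bound, which it also needs for the BPE analysis), so its Section~5 argument reduces to the sparsification step plus one line; you re-derive the charging from scratch, phrase it via phrase boundaries becoming interior edges, and then split the charged edges by parity to get $\OPT(s,k)\le U^{0}+U^{1}\le 2\max(U^{0},U^{1})$. What your route buys: the factor-$2$ loss sits entirely in the analysis (max versus sum of the two tilings) while the algorithm's utility is exactly $\max(U^{0},U^{1})$, and the argument is self-contained; what the paper's route buys: reuse of the packing lemma and an algorithm whose benchmark $F_k(s)$ directly dominates $P_k(s)\ge\OPT(s,k)$. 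The bookkeeping you flag (each symbol decodes to a fixed substring of $s$, each replacement interiorizes exactly one new edge, no edge twice) is exactly the paper's $\decode$-based charging in Lemma~3.2 and goes through as you expect; your tightness example $\mathtt{a}^{n}$ with $k=\Theta(\log n)$ also matches the paper's concluding remark that no algorithm merging only original-symbol pairs can beat $0.5$.
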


The following four sections are dedicated to the proofs of Theorems~\ref{thm1}--\ref{thm4}. In \S\,\ref{sec6} we conclude with a list of open questions.

\section{Hardness of Approximation}\label{sec2}

In this section we show that the optimal merge sequence (OMS) problem is APX-complete, proving part of Theorem~\ref{thm1}. The APX-completeness of OPE is based on similar ideas and is completed in Appendix~\ref{appA}. As we show in \S\,\ref{sec3}, both OPE and OMS are in APX (i.e., admit a constant-factor approximation), it just remains therefore to show APX-hardness. We do so via an $L$-reduction (linear reduction, see~\cite{PapadimitriouY91}) from maximum cut in cubic graphs.

In maximum cut, given an undirected graph $G=(V, E)$, 
the task is to partition the vertex set $V$ into two parts, such as to maximize the number of edges %total weight of edges 
between the two parts. %sets $A$ and $B$, so that the number of edges between $A$ and $B$ is maximized.
Maximum cut is NP-complete~\cite{Karp72}, and its optimization version is APX-complete~\cite{PapadimitriouY91} even %in the unweighted case (i.e., when $w(e) = 1$ for all $e \in E$), and 
when restricted to input graphs in which every vertex has degree exactly three~\cite{AlimontiK00}. 

Our reduction is from such (cubic, undirected, unweighted) instances $G = (V,E)$ of maximum cut. 
%We can assume that there are no vertices that are not connected to any edge. 
Assume $V = \{v_1, \dots, v_n\}$ and associate a symbol $\ell_i$ to each vertex $v_i \in V$. We construct a string $s$ by concatenating, for each edge $\{v_i, v_j\} \in E$ in arbitrary order, a string $s_{ij}$, with $i<j$. Then, we append for each vertex $v_i \in V$ four copies of a string $s_i$. Finally, we append $20n$ copies of a padding string $s_0$. These strings are defined as follows, with product signs indicating concatenation:
\begin{align*}
s_{ij} & ~=~ \#\ell_i\#\#\ell_j\#\,|\,\#\ell_j\#\#\ell_i\#\,|, \\ %\quad & (1~\mathrm{copy~for~edge~} \{v_i,v_j\})\\
s_i & ~=~ \ell_i\#\,|\,\#\ell_i\,|\,\ell_i\#\ell_i\,|, \\% \quad & (3 \mathrm{~copies~for~vertex~}v_i)\\
s_0 & ~=~ \#\#\,|, \\%  \quad & (20n \mathrm{~copies})\\
s & ~=~ \prod_{\substack{\{v_i,v_j\} \in E\\i<j}}{s_{ij}} \cdot \prod_{v_i \in V}{s_i s_i s_i s_i} \cdot \prod_{t=1}^{20n}{s_0}.
\end{align*}

Here, 
$\#$ is a single fixed symbol, and $|$ denotes a distinct symbol for each occurrence.  %We then concatenate all strings $s_{ij}$ in arbitrary order, 
Setting $k = n+1$, we take $(s,k)$ to be the resulting OMS instance.  
Notice that the construction takes polynomial time. 

\medskip

We first show that $G$ has a cut of size at least $c$ if and only if $\OPT^{m}(s,k) \geq 34n + c$, i.e., if there is a merge sequence of length $k$, of at least the given utility for $s$. This claim already implies the NP-hardness of the problem and sets the stage for our hardness of approximation proof. 

\medskip

$(\Rightarrow)$ Consider a cut of $G$, i.e., a partitioning $(S,V \setminus S)$ of $V$ with $c$ edges across the cut. 
We construct a merge sequence of length $k$ (recall that a \emph{merge} replaces every occurrence of a pair with a new symbol). We first merge $\# \ell_i$ for every $v_i \in S$, and $\ell_i \#$ for every $v_i \in V \setminus S$. Notice that each merge step achieves utility $14$ ($6$ from the $s_{ij}$ corresponding to the three edges incident to $v_i$, and $8$ from the $s_i$ strings), for a total of $14n$. Then, we merge $\#\#$, which achieves utility $c+20n$. 
To see this, notice that, after the first $n$ merges, one of the two $\#\#$ pairs remains in a string $s_{ij}$ exactly if the corresponding edge $\{v_i,v_j\}$ is in the cut, and otherwise no such pair remains in $s_{ij}$. The term $20n$ is from the padding string $s_0$. This adds up to a utility of $34n + c$ as required.

\medskip

$(\Leftarrow)$ Consider a merge sequence $\cR$ of length $k$, 
of utility $34n + c$ for $s$. We call the merge sequence \emph{well-formed} if it includes merging the pair $\#\#$ and  for each $v_i \in V$, exactly one of the pairs $\ell_i\#$ and $\#\ell_i$. We show that it is sufficient to consider well-formed merge sequences.

\begin{claim}\label{claim_wf}
If $\cR$ is not well-formed, then we can find (in polynomial time) a well-formed merge sequence $\cR'$ of the same length that achieves strictly greater utility on $s$.
\end{claim}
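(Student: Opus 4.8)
The plan is to make the statement effective: from the given $\cR$ I will read off a partition of $V$, build a specific well-formed sequence realising it, and then argue directly that it beats $\cR$. Concretely: for each $v_i$, look at the first merge of $\cR$ whose pair is $\#\ell_i$ or $\ell_i\#$ and set the ``side'' $\sigma_i$ accordingly (if no merge of $\cR$ acts on $\ell_i$, put $v_i$ on either side); let $\phi$ be the resulting partition and $c' = |\phi|$ its number of cut edges. Let $\cR'$ be the canonical well-formed sequence realising $\phi$: merge $\#\ell_i$ for $v_i$ on one side and $\ell_i\#$ for $v_i$ on the other (in any order), then merge $\#\#$. By exactly the computation in the $(\Rightarrow)$ part above, $\cR'$ is well-formed, has length $n+1$, and has utility $34n + c'$; and it is clearly computable from $\cR$ in polynomial time. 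The whole claim therefore reduces to: if $\cR$ is not well-formed, then the utility of $\cR$ is strictly less than $34n + c'$.

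To prove that inequality I would first record three structural facts about the gadget. (i) \emph{Separators are inert}: each $|$-symbol occurs once, hence any derived symbol whose expansion contains a $|$ occurs once, hence any merge whose pair involves such a symbol has utility at most $1$; so every other merge acts within one of the maximal separator-free blocks, all of length at most $6$. (ii) \emph{$\#\#$ dominates}: no merge ever increases the occurrence count of any pair, and in $s$ the pairs with more than one occurrence are precisely $\#\#$ (with $23n$ occurrences) and $\#\ell_i, \ell_i\#$ (with $14$ each); consequently every merge other than the $\#\#$-merge has utility at most $14$, and a \emph{bad} merge --- one that is neither the $\#\#$-merge nor the first merge acting on some $\ell_i$ --- has utility at most a constant $c_0 < 14$ (repeated vertex merges, and merges of derived non-separator pairs such as $c_i\#$, are all cheap). (iii) \emph{Edge-gadget accounting}: the $20n$ padding blocks $\#\#$ are destroyed only by the $\#\#$-merge or a separator merge, and a short case check on the two length-$6$ blocks of an $s_{ij}$ shows that, at the moment of the $\#\#$-merge, the number of surviving $\#\#$-pairs inside $s_{ij}$ is $2$ if neither of $v_i,v_j$ has yet had its first vertex merge, $1$ if exactly one has, and $[\sigma_i \ne \sigma_j]$ if both have. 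Now attribute every unit of utility of $\cR$ to the responsible slot. If the $\#\#$-merge does not occur in $\cR$ then the utility is at most $14(n+1) < 34n \le 34n + c'$. Otherwise, let $a$ be the number of vertices whose first vertex merge precedes the $\#\#$-merge; summing (iii) over edges, the $\#\#$-merge earns at most $20n + 2e_0 + e_1 + c'$ where $e_0, e_1$ count the edges with $0$ resp.\ $1$ endpoints among those $a$ vertices, and the point is that $2e_0 + e_1$ equals the degree-sum of the remaining $n-a$ vertices, i.e.\ $3(n-a)$; combining this with the bound $14$ per first-vertex-merge and $c_0$ per bad slot, and using that the number of bad slots is strictly positive exactly when $\cR$ is not well-formed, the per-slot tally collapses to something strictly below $34n + c'$.

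I expect the main obstacle to be making that last tally honest. The difficulty is that the ``per-slot'' picture is too crude: when a vertex is merged \emph{after} the $\#\#$-merge, its yield inside an edge gadget and the $\#\#$-merge's yield there are coupled --- a neighbour's pre-emptive vertex merge can shift a unit of credit between the two --- so the accounting has to be carried out gadget by gadget with this coupling tracked, and one must check that the gadget's multiplicities ($4$ copies of each $s_i$; $20n$ padding blocks) are large enough that the inequality always survives, including against a $\cR$ that wastes many slots near a single $\ell_j$. An alternative packaging of the same argument is iterative local surgery on $\cR$ --- first replace $\cR$'s last merge by the $\#\#$-merge if that merge is missing (this gains at least $19n - 14 > 0$), then repeatedly delete a bad merge of $\cR$ and splice in a fresh vertex merge for a currently-uncovered vertex, chosen so the spliced merge out-earns the deleted one together with the collateral it causes --- but this variant runs into exactly the same constant-juggling, so I would expect the eventual proof to push through one of these routes rather than find a shortcut around it.
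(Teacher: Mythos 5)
There is a genuine gap: you reduce the claim to the inequality ``if $\cR$ is not well-formed then its utility is strictly below $34n+c'$'', but you never establish it, and the per-slot tally you actually write down does not suffice. Bounding each first vertex merge by $14$, each bad slot by a constant $c_0\le 10$, and the $\#\#$-merge by $20n+2e_0+e_1+c'$ with $2e_0+e_1=3(n-a)$ gives at most $20n+3(n-a)+c'+14a+10u\,$(plus $14$ per vertex merged after $\#\#$), which collapses to roughly $37n+c'-3a-4u$; when $a$ is small (e.g.\ the $\#\#$-merge comes first and all vertex merges come after it) this exceeds $34n+c'$, so the ``collapse to something strictly below $34n+c'$'' is false at the level of detail you give. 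The missing ingredient is precisely the coupling you flag as the ``main obstacle'': once $\#\#$ has been fully merged, each edge gadget retains only one copy of each of $\#\ell_i$ and $\ell_i\#$, so a vertex merge performed after the $\#\#$-merge earns at most $3+8=11$, not $14$; with that refinement the aggregate bound does come out to $34n+c'-u<34n+c'$ for $u\ge 1$ uncovered vertices. Since you neither prove this refined bound nor pin down $c_0$ (your claim that all non-first, non-$\#\#$ merges are ``cheap'' needs the case analysis: merges touching a separator or a derived symbol have utility at most $1$ resp.\ $6$, and a second merge on the same vertex at most $10$), the argument as submitted is a plan rather than a proof.

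It is also worth noting that your ``alternative packaging'' is essentially the paper's actual proof, and it is not stuck on the same difficulty: one removes the $t$ bad merges and appends, for each uncovered vertex $v_i$, a full merge of $\ell_i\#$ at the end of the sequence. Appended merges cannot decrease the utility of earlier merges, each appended merge earns at least $8$ from the four copies of $s_i$ plus one non-$\#\#$-overlapping copy in each of the three incident edge gadgets, i.e.\ at least $11$, while each removed bad merge contributed at most $10$ (by the three-case analysis above), giving a strict gain per bad merge. The constants here are easy, so if you pursue either route you should carry out this short case analysis explicitly; in your global-accounting version you must additionally track the $\#\#$-versus-late-vertex-merge coupling per edge gadget, which is exactly what your current tally omits.
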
 
\begin{proof}
Observe that any new symbol introduced in a merge encodes a substring of $s$. Since pairs other than $\#\#$ occur at most $14$ times in $s$, a string different from $\#\#$ can also occur at most this many times. Thus, any merge other than $\#\#$ can have utility at most $14$. 

Recall that \emph{any} well-formed merge sequence has utility at least $34n$, by the earlier correspondence to an arbitrary cut, possibly of size zero. 
So, if $\cR$ does not merge $\#\#$, then its total utility is at most $14(n+1) < 34n$ and we can replace $\cR$ by an arbitrary well-formed merge sequence of larger utility. Assume therefore that $\#\#$ is merged by $\cR$. %\in \cP(\cR)$.

If $\cR$ is not well-formed, then there are $t$ indices (for some $t>0$) $i_1, \dots, i_t \in [n]$ such that none of the two pairs $\ell_{i_j}\#$, $\#\ell_{i_j}$ % \cap \cP(\cR) = \emptyset$ 
are merged by $\cR$, for all $j \in [t]$. Instead, there are exactly $t$ ``bad'' merges, which can be of the types listed below. 
We show that removing the $t$ bad pairs from $\cR$ and adding the missing pairs $\ell_{i_j}\#$, for all $j \in [t]$, we obtain a merge sequence $\cR'$ of larger utility on $s$.

As $\cR'$ is well-formed, each newly added pair contributes a utility of at least $11$. This is because the pair appears once in each of the three relevant strings $s_{ij}$, in copies that do not overlap with $\#\#$, and thus contribute to the gain. An additional utility of $8$ comes from the four copies of $s_{i}$. As we add the pairs to the end of the merge sequence, they do not affect the utilities of other merges. 

To finish the proof, we argue that the ``bad'' merges that we removed from $\cR$ had a gain of at most $10$ each; as the remaining merges are unaffected, we obtain an overall increase in utility of at least one for each bad merge. Bad merges that deviate from a well-formed merge sequence can be of the following types:

\begin{enumerate}
\item A merge involving the symbol $|\,$: this yields a utility of one, by construction. 
\item A merge involving a new symbol (not part of the original $s$): such a merge corresponds to a substring of $s$ of length at least three. Of all such substrings, those involving $|$ occur at most once, $\#\ell_i\#$ occurs $6$ times, $\#\#\ell_i$ and $\ell_i\#\#$ occur $3$ times, and $\ell_i\#\ell_i$ occurs $4$ times, for all $i$; there are no further cases (longer substrings also cannot occur more than 6 times). Thus any such merge can yield utility of at most $6$.
\item A merge of $\ell_i\#$ after $\#\ell_i$ has already been merged, or a merge of $\#\ell_i$ after $\ell_i\#$ has already been merged: this yields utility at most $10$. To see this, observe that for all such pairs, initially there are $6$ copies in $s_{ij}$ strings and $8$ copies in $s_i$ strings. Of the latter, at least $4$ have been destroyed, yielding the claim.  \qedhere
\end{enumerate}
\end{proof}

Suppose now that $\cR$ is well-formed (otherwise, transform it according to Claim~\ref{claim_wf}). Using $\cR$, compute a cut $(S,V\setminus S)$ of $G$ by letting $S$ be the set of vertices $v_i \in V$ where merge $\#\ell_i$ is in $\cR$. 

Consider the string $s_{ij}$ for an edge $\{v_i, v_j\}$. Notice that we can achieve a total utility of $5$ for this string, if and only if one of $v_i$ and $v_j$ was placed in $S$ and the other was not, i.e., if the edge contributes to the constructed cut. (This is possible either if we merge $\#\#$ last, or if we merge $\#\#$ between the two merges corresponding to $v_i$ and $v_j$.) Otherwise, both $\#\ell_i$ and $\#\ell_j$, or both $\ell_i\#$ and $\ell_j\#$ are merged, and a utility of at most $4$ is achievable for $s_{ij}$, as both $\#\#$ pairs overlap with the two other pairs. 

Of the total utility of at least $34n+c$ we obtain $20n$ in the $s_0$-strings, $8n$ in the $s_i$-strings, with $6n + c$ remaining for the $s_{ij}$-strings corresponding to the $3n/2$ edges of $G$. By the above calculations these can contribute exactly $4 \cdot 3n/2 = 6n$ plus the size of the constructed cut, which must therefore be at least $c$. This concludes the reduction for NP-hardness.

\medskip

We next turn the argument into an L-reduction~\cite{PapadimitriouY91}, thus showing the APX-hardness of OMS. Let $\OPT^c$ denote the maximum cut size for graph $G$, and let $\OPT^m = \OPT^{m}(s,k)$ be the maximum utility for the OMS instance resulting from the above reduction. It remains to show that there exist positive constants $\alpha$ and $\beta$ (independent of the input $G$) so that, for all sufficiently large $G$:
\begin{enumerate}
\item $\OPT^{m} \leq \alpha \cdot \OPT^c$, and
\item for every feasible solution of the OMS instance $(s,k)$ of utility $u$ we can obtain, in polynomial time, a cut of $G$ of size $c$, with $|\OPT^c - c\,| \leq \beta \cdot |\OPT^m - u\,|$.
\end{enumerate}

Part 1.~is immediate: $|s| = 121 \cdot n$ (using that $G$ is cubic), which is an upper bound on $\OPT^m$; on the other hand, from the well-known lower bound on the maximum cut we have $\OPT^c \geq 3n/4$, implying the inequality for $\alpha = 162$.  

For Part 2., consider a merge sequence $\cR$ of utility $u$ for $(s,k)$. If $\cR$ is not well-formed, we use the transformation of Claim~\ref{claim_wf} to obtain a well-formed merge sequence $\cR'$ for $(s,k)$ of utility $u' > u$. Otherwise let $u' = u$ and $\cR' = \cR$.
Now, as $\cR'$ is well-formed, we interpret it as a cut of size $c$ in $G$, where, by the earlier reduction, $c = u'-34n$. Moreover, by applying this argument to the optimal OMS solution, we obtain a cut of size $\OPT^m - 34n$.  As we have shown earlier that $\OPT^m \geq 34n + c'$ for every cut $c'$, our cut must be maximal and $\OPT^c = \OPT^m - 34n$.  
We thus have $|\OPT^c - c\,| = |\OPT^m - u'\,| \leq |\OPT^m - u\,|$, proving the inequality for $\beta = 1$.

This completes the APX-hardness of OMS. In Appendix~\ref{appA} we extend the argument to show the APX-hardness of the OPE problem. 

\medskip

Two small remarks are in order. First, we note that the APX-hardness also applies for the compressed length minimization version of OMS (and OPE); in the above hard instance the input length is $|s| = 121n$, and $\OPT^m(s,k) \geq 34n$.  
Thus,  
a $(1+\varepsilon)$-approximation for $|s'| = |s| - \OPT^m(s,k)$ would also yield a $(1-\varepsilon')$-approximation for $\OPT^m$, with $\varepsilon' = \left((121-34)/34\right)\varepsilon \leq 2.6\,\varepsilon$. 

Second, we can define restricted forms of the OMS (and OPE) problems where it is only allowed to merge pairs of original input symbols, and not pairs involving new symbols. (Equivalently, all resulting tokens would be of length at most two; note that the BPE solution does not necessarily conform to this restriction.) Since in the above reduction all merges involve only input symbols, our APX-hardness results immediately transfer to these restricted variants.

\section{Approximation ratio of Byte-Pair Encoding} \label{sec3}

In this section we study the approximation ratio of BPE for the OPE problem, showing constant upper and lower bounds for this ratio, proving Theorem~\ref{thm2}.

\paragraph{Upper bound.} Consider the string $s = \mathtt{abaacaaba\,|\,aca}$, with a merge sequence of length $k=4$. Suppose BPE first performs the merge $\mathtt{aa \rightarrow X}$, resulting in the string $\mathtt{abXcXba\,|\,aca}$. As now each pair occurs at most once, subsequent merges can only shorten the string by $1$, for a total utility of $5$. 
Consider now the alternative merge sequence $(\mathtt{ac \rightarrow X}$,~ $\mathtt{Xa \rightarrow Y}$,~ $\mathtt{ab \rightarrow Z}$,~ $\mathtt{Za \rightarrow T})$ resulting in $\mathtt{TYT\,|\,Y}$, with utility $8$, and a ratio of $5/8 = 0.625$.

To enforce that BPE merges $\mathtt{aa}$ first, we can concatenate $t$ copies of $s \cdot \#$, and a final $\mathtt{aa}$. The utility of BPE is now (without relying on favorable tie-breaking) $5t+1$, whereas the alternative merge sequence yields $8t$, for a ratio arbitrarily close to $0.625$. Notice that while we claimed the result for OPE, the upper bound (more strongly) applies to OMS (i.e., for the ratio $\bpe(s,k)/\OPT^m(s,k)$), as the alternative encoding above is given by a full merge sequence. 

\paragraph{Lower bound.}

We show our lower bound (more strongly) for OPE. Consider an optimal (possibly partial) merge sequence $\cR^* = (\cR_1^*,\dots,\cR_k^*)$, where $\cR_i^* = \replace^*_{a_i b_i \rightarrow c_i}$ for $i \in [k]$. Let $s'$ be the string resulting from applying $\cR^*$ to input $s$. Recall that $\OPT(s,k) = |s| - |s'|$. 

In order to relate $\bpe(s,k)$ and $\OPT(s,k)$, we first introduce a simpler quantity that upper bounds both. 

For a string $s$ of length $n$, a set of $t$ indices $\{i_1, \dots, i_t\} \subseteq [n-1]$ is a \emph{pair packing}, if the pairs starting at the given indices of $s$ do not overlap and are all equal as strings. Formally, $|i_{j}-i_k| \geq 2$ for $j,k \in [t]$, and $s[i_1: i_1+1] = \cdots = s[i_t: i_t+1]$. 
A \emph{$k$-packing} is the union of $k$ pair packings where no element appears in more than one of the pair packings. We refer to a largest possible $k$-packing as the \emph{optimal $k$-packing} and denote its size by $P_k(s)$. We remark that $P_k(s)$ can be computed in polynomial time, but this is not needed in our current study. 

\begin{lemma}\label{lem_freq}
For $s'$ obtained from $s$ by applying an arbitrary (partial) merge sequence $\cR^*$ of length $k$, we have $|s|-|s'| \leq P_k(s)$.
\end{lemma}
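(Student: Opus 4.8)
The plan is to \emph{charge} every unit of compression utility produced by $\cR^*$ to a single index of $s$, in such a way that the indices charged by any one merge step form a pair packing and the indices charged over all $k$ steps are pairwise distinct; the set of all charged indices is then a $k$-packing of size exactly $|s|-|s'|$, which gives the claim.

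First I would set up the bookkeeping. Writing $s^{(0)} = s$ and $s^{(i)} = \cR_i^*(s^{(i-1)})$ (so $s' = s^{(k)}$), every symbol $\sigma$ occurring in some $s^{(i)}$ stands for a fixed substring $\decode(\sigma)$ of $s$: an original symbol decodes to itself, and $\decode(c_i) = \decode(a_i)\,\decode(b_i)$. The reversibility required of a partial replacement rule gives $\replace_{c_i \rightarrow a_i b_i}(s^{(i)}) = s^{(i-1)}$, hence $\decode(s^{(i)}) = \decode(s^{(i-1)}) = \cdots = s$; so the decodings of the symbols of $s^{(i)}$, read left to right, concatenate to $s$ and partition $\{1,\dots,n\}$ into consecutive \emph{blocks}. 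Moreover each step obtains $s^{(i)}$ from $s^{(i-1)}$ by collapsing a set of pairwise non-overlapping occurrences of $a_i b_i$ into $c_i$, each collapse fusing two adjacent blocks into one and shortening the string by one. Thus the block partitions induced by $s^{(0)}, s^{(1)}, \dots$ form a chain of coarsenings, and $|s|-|s'| = \sum_{i=1}^k r_i$, where $r_i$ is the number of occurrences collapsed at step $i$.

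Next I would define the charging: a collapsed occurrence at step $i$ occupies a position of $s^{(i-1)}$ where a symbol $a_i$ (with block $s[p : p+\ell_i-1]$, $\ell_i := |\decode(a_i)| \ge 1$) is immediately followed by a symbol $b_i$ (with block $s[p+\ell_i : q]$); I charge it the index $p+\ell_i-1 \in \{1,\dots,n-1\}$, the left endpoint of the boundary it dissolves. Then two things need to be verified. (i) The indices charged within step $i$ form a pair packing: each collapsed occurrence decodes to a copy of the fixed string $\decode(a_i)\decode(b_i)$, so the charged index $j$ always has $s[j:j+1]$ equal to the last symbol of $\decode(a_i)$ followed by the first symbol of $\decode(b_i)$ --- the same pair throughout --- and, since these decoded copies are pairwise disjoint intervals of length $|\decode(a_i)|+|\decode(b_i)| \ge 2$ with $j$ at a fixed offset inside each, the charged indices are pairwise at distance at least $2$. (ii) Indices charged at different steps are distinct: letting $B_i \subseteq \{1,\dots,n-1\}$ collect the indices $j$ for which positions $j$ and $j+1$ lie in a common block of $s^{(i)}$, coarsening yields $\emptyset = B_0 \subseteq B_1 \subseteq \cdots \subseteq B_k$, the index charged at step $i$ lies in $B_i \setminus B_{i-1}$ (the dissolved boundary separates two blocks before step $i$ and is internal afterwards), and these difference sets are pairwise disjoint. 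Combining (i) and (ii), the charged indices form a set of size $\sum_i r_i = |s|-|s'|$ that splits into $k$ pair packings, i.e., a $k$-packing, and hence $|s|-|s'| \le P_k(s)$.

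I expect the main obstacle to be fact (ii): ruling out that a boundary of $s$ is dissolved (and thus charged) twice. Reasoning directly with positions of the intermediate strings is awkward, because these positions shift at every merge; the block-partition viewpoint above sidesteps this, since the objects that matter --- indices of $s$ and the coarsening relation on blocks --- are stable across steps. A secondary point that also relies on the reversibility hypothesis is that the occurrences collapsed within a single step are non-overlapping, so that their decoded images are genuinely disjoint intervals of $s$; this is immediate from the definition but worth stating explicitly.
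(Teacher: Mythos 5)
Your proof is correct and follows essentially the same route as the paper: you charge each replacement to the index in $s$ of the last symbol of the decoded block of $a_i$, verify within each step that the charged indices form a pair packing, and rule out double charging across steps. Your block-coarsening/boundary-set formulation of that last point is just a slightly more formal restatement of the paper's observation that a charged index becomes an interior (non-last) position of its block and can never become a last position again.
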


\begin{proof}

We create a $k$-packing from $s$ and $\cR^*$, of size $|s| - |s'|$. To do this, it will be useful to map every occurrence of a symbol during the application of $\cR^*$, to the substring of $s$ which is encoded by the symbol. For a string $c$ let $\decode(c,0)=c$, and for $i \in [k]$ let $\decode(c,i) = \decode(\replace_{c_i \rightarrow a_i b_i}(c),i-1)$. 
Observe that every occurrence of $c_i$ after the application of $\cR_i^*$ encodes a unique copy of $\decode(c_i, i)$ in $s$, these copies do not overlap, and $\decode(c_i, i) = \decode(a_i, i-1) \cdot \decode(b_i, i-1)$. 

We \emph{charge} each replacement $a_i b_i \rightarrow c_i$, i.e., each unit of the compression utility, in the order of their application in $\cR^*$, to a certain index of $s$. More precisely, charge the replacement $a_i b_i \rightarrow c_i$ to the index in $s$ of the last symbol of $\decode(a_i, i-1)$ in the location encoded by the current copy of $a_i$. Intuitively, we charge the merging of a pair to the location in $s$ where the ``gluing'' happens.

Observe that this index cannot be charged again in any step, since after replacement $a_i b_i \rightarrow c_i$, it points to a symbol ``inside'' $\decode(c_i, i)$, that is not the last one, and future merges cannot cause it to become a last symbol.  The indices charged due to $\cR_i^*$ are at pairwise distance at least $2$, for all $i \in [k]$ (as otherwise two copies of $\decode(c_i,i)$ would overlap in $s$), and the pairs starting at the charged indices are equal as strings, as all consist of the last symbol of $\decode(a_i, i-1)$ and first symbol of $\decode(b_i, i-1)$. It follows that the set of indices charged during the entire process is a $k$-packing, thus, the total utility $|s| - |s'|$ is at most the size $P_k(s)$ of the optimal $k$-packing.  
\end{proof}

Notice that as a corollary of Lemma~\ref{lem_freq} we have $\OPT(s,k) \leq P_k(s)$.

\medskip

In the remainder of the section we \emph{lower bound} the BPE utility in terms of the quantity we defined, which, together with Lemma~\ref{lem_freq} will immediately imply the claimed upper bound on the approximation ratio of BPE. 

\begin{lemma}
For all strings $s$ and $k\geq 0$ we have $\bpe(s,k) \geq P_k(s)/3$.
\end{lemma}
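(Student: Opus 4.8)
The plan is to compare BPE's greedy behavior against an optimal $k$-packing step by step, using a potential/amortization argument. Fix $s$ and $k$, and let $\{I_1,\dots,I_k\}$ be an optimal $k$-packing of $s$, so $\sum_j |I_j| = P_k(s)$; here each $I_j$ is a pair packing, i.e.\ a set of non-overlapping occurrences of some fixed pair $p_j$. The key observation I would exploit is that in every BPE step, the most frequent pair occurs at least as often as the ``number of still-available copies'' of any pair of the current string; in particular, at the start, the first merge has utility at least $\max_j |I_j|$. The difficulty is that a single BPE merge can destroy occurrences in several of the $I_j$'s at once (a merged pair $a_ib_i$ may overlap copies of $p_j$), so I need to track how fast the packing can be eroded relative to how much utility BPE collects.

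First I would set up the amortization: define, after BPE has performed $t$ merges producing string $s^{(t)}$, a residual packing $\mathcal{Q}^{(t)}$ obtained from the optimal $k$-packing by deleting every occurrence that is no longer ``intact'' in $s^{(t)}$ (an occurrence survives only if the two symbols forming it were never touched — neither merged together with something, nor split). Let $q_t = \sum_j |\mathcal{Q}^{(t)}_j|$, so $q_0 = P_k(s)$. The crucial combinatorial claim to prove is: in the $t$-th merge, BPE gains utility $g_t$, and $q_{t-1} - q_t \le c \cdot g_t + r_t$ for appropriate constants, where the ``remaining'' packings that are never damaged still contribute. The clean version I would aim for: the most frequent pair in $s^{(t-1)}$ occurs at least $\lceil q_{t-1}/k \rceil$ times among the surviving packing occurrences of whichever $p_j$ is most represented; replacing it full-merges all its copies, giving $g_t \ge$ (that frequency) $- O(1)$ for overlap corrections. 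Meanwhile performing that merge can destroy at most $O(g_t)$ occurrences across the other packings, because each destroyed occurrence of some $p_{j'}$ must physically overlap one of the merged copies, and the merged copies are themselves non-overlapping and $g_t$-many. Balancing ``BPE took a large chunk out of one packing'' against ``BPE damaged at most a proportional amount elsewhere'' over all $k$ steps, summing the telescoping inequality $\sum_t (q_{t-1}-q_t) \le 3\sum_t g_t = 3\,\bpe(s,k)$ (roughly), and using $q_k \ge 0$, yields $P_k(s) \le 3\,\bpe(s,k)$.

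The step I expect to be the main obstacle is pinning down the constant $3$ exactly, i.e.\ showing that each unit of BPE utility can be ``blamed'' for at most three lost packing units. This requires a careful case analysis of how a merged pair $a_ib_i$ interacts with a surviving occurrence of some $p_j$: the occurrence can be destroyed by (i) sharing its left symbol with the merge, (ii) sharing its right symbol, or (iii) being exactly the merged pair itself — and one must ensure the charging of each such destruction to a distinct ``slot'' of the current merge, so that no more than a constant number of destructions per unit of $g_t$ occur. Overlapping-pair subtleties (e.g.\ $\mathtt{aaa}$) and the fact that BPE is defined to full-merge will need to be handled, but the earlier remarks in the paper indicate these contribute only lower-order terms. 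An alternative, possibly cleaner route I would also try: apply Lemma~\ref{lem_freq}-style decoding to the BPE sequence itself to show $\bpe(s,k)$ is at least the size of some $k$-packing of $s$, and then argue any two $k$-packings are within a factor $3$ of each other — but I suspect the direct greedy-exchange amortization above is the robust path.
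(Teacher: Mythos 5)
Your overall strategy---tracking a residual packing, charging destroyed packing occurrences to BPE's utility, and invoking greediness to lower-bound each merge---is in the same spirit as the paper's proof, but as sketched the final accounting has a genuine gap. Telescoping $\sum_t (q_{t-1}-q_t)\le 3\sum_t g_t$ only gives $P_k(s)-q_k\le 3\,\bpe(s,k)$, and ``using $q_k\ge 0$'' points in the wrong direction: packing occurrences that BPE never touches survive all $k$ merges, so $q_k$ need not vanish, and it is precisely these survivors that must be charged to BPE's utility via greediness. Doing that in the straightforward way (each step's utility $g_t$ is at least the size of the largest surviving part, hence $\bpe(s,k)\ge q_k$) only yields $P_k(s)\le 4\,\bpe(s,k)$, because your destruction constant cannot be pushed below $3$: a merged copy whose two symbols are original singletons at positions $p,p+1$ can kill three intact occurrences at once (those starting at $p-1$, at $p$, and at $p+1$), and the occurrence that \emph{is} the merged pair must be counted among them. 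So the sketch as written does not reach the constant $3$, which you yourself flag as the main obstacle but do not resolve.

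The missing idea---the heart of the paper's argument---is to interleave the two accountings step by step instead of adding them at the end. The paper extracts a $k$-packing $\cB=\cB_1\cup\cdots\cup\cB_k$ from BPE's own merges via the decoding/charging of Lemma~\ref{lem_freq} (so $|\cB|=\bpe(s,k)$), and processes the optimal packing $\cP=\cP_1\cup\cdots\cup\cP_k$ in an order matched to BPE's merges: at step $i$ it deletes from $\cP$ every index of $\cB_i$ together with its two neighbors (at most $3|\cB_i|$ deletions), maintaining the invariant that surviving packing indices are never adjacent to already-merged positions. Then a case distinction finishes the step: if BPE's pair $a_ib_i$ equals some remaining packing pair, that part is swapped into position $i$ and greediness forces it to be wiped out entirely by the deletions (otherwise BPE could have merged one more copy in round $i$); if not, the indices of $\cB_i$ are not in $\cP$ at all, so only two neighbors per unit were actually deleted, and the spare budget of $|\cB_i|$ absorbs $\cP_i$ after showing $|\cP_i|\le|\cB_i|$, again by greediness. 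This exchange is exactly what makes the constant $3$ work. Finally, your alternative route is not viable as stated: ``any two $k$-packings are within a factor $3$ of each other'' is false (a $k$-packing may consist of a single index); what is needed is that the particular packing derived from BPE is large relative to the optimal one, which again requires the greedy-exchange argument above.
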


\begin{proof}

As before, $s^{(0)} = s$ and let $s^{(k)} = s'$ denote the encoding obtained by BPE, with $s^{(i)}$ resulting from $s^{(i-1)}$ (for $i \in [k])$ by a greedy merge step, as described in the definition of BPE.

Let $a_i b_i$ be the pair merged by BPE when going from $s^{(i-1)}$ to $s^{(i)}$, for $i \in [k]$, and let $\cB$ be a $k$-packing derived from the BPE merge sequence according to the process in Lemma~\ref{lem_freq}. Let $\cB_1 \cup \cdots \cup \cB_k$ be the partitioning of $\cB$ into the $k$ pair packings corresponding to the $k$ merge steps of BPE.
(When BPE does a replacement $a_i b_i \rightarrow c_i$ in $s^{(i-1)}$, then the corresponding entry in $\cB_i$ is the index of the last symbol of the substring in $s$ encoded by $a_i$.) Note that $|\cB| = \bpe(s,k)$.

Let $\cP$ be an optimal $k$-packing partitioned into $k$ pair packings $\cP_1 \cup \cdots \cup \cP_k$, where $\cP_i$ contains starting indices of the pair $x_i y_i$ in $s$ for $i \in [k]$. Note that $|\cP| = P_k(s)$. Also note that pairs $x_i y_i$, $x_j y_j$ from different pair packings are not necessarily distinct, and the same holds for pairs $a_i b_i$, $a_j b_j$ from different steps of BPE. 

\medskip

We iterate $i$ from $1$ to $k$, relating for each $i$ the number of entries in $\cB_i$ and the number of entries in $\cP_i$. In each step we delete at most $3 |\cB_i|$ elements from $\cP$. The set $\cB$ is not modified during the process. In the end, $\cP$ will be empty, implying the claim. 

\medskip

We maintain two invariants:

$\mathbf{I_1:}$ After step $i$, the sets $\cP_j$ for $j \leq i$ are empty. 

$\mathbf{I_2:}$ After step $i$, no index $x \in \cP$ is a neighbor of an index $y \in \cB_j$, for $j \leq i$. \\
Two indices $x$ and $y$ are neighbors if $|x-y|=1$.

\medskip
Intuitively, $\mathbf{I_1}$ ensures that all elements of $\cP$ are accounted for, and $\mathbf{I_2}$ ensures that pairs $s[x:x+1]$ corresponding to $x \in \cP$ can still be merged by BPE, as the two symbols forming the pair did not take part in merges yet. 
Initially both invariants are trivially true. 

\medskip

Consider step $i$. For all elements $x \in \cB_i$, delete $x$ and its neighbors $x-1$ and $x+1$ from $\cP$. Note that we delete at most $3|\cB_i|$ elements in this way. Also note that this establishes $\mathbf{I_2}$ for the current step. It remains to verify $\mathbf{I_1}$. We distinguish two cases.

\medskip

Case 1. If $a_i b_i = x_\ell y_\ell$, for some $i \leq \ell \leq k$, then swap $x_i y_i$ with $x_\ell y_\ell$ and $\cP_i$ with $\cP_\ell$ (this is only necessary if $\ell \neq i$; also note that the swap does not affect $\mathbf{I_1}$, $\mathbf{I_2}$). Now, if $\cP_i$ is nonempty (after the deletions and swap), then take any element $x \in \cP_i$. Notice that BPE could have also merged $s[x:x+1]$ in the $i$-th step, in addition to the pairs indexed by $\cB_i$. This is because the pair equals $a_i b_i$, it does not neighbor to any pair merged by BPE in the present or past rounds (by $\mathbf{I_2}$), and the pair itself was not merged before, as otherwise $x$ would have been deleted. This contradicts the definition of BPE, showing $\cP_i = \emptyset$, and hence, $\mathbf{I_1}$. 

\medskip

Case 2. If $a_i b_i \neq x_\ell y_\ell$ for $i \leq \ell \leq k$. Notice that in this case, $\cP \cap \cB_i = \emptyset$ even before the deletion (by $\mathbf{I_1}$), thus we could have only deleted the neighbors of $\cB_i$ from $\cP$ in the $i$-th step. If, before the deletion, $|\cP_i| > |\cB_i|$, then BPE could have merged, instead of the pairs $a_i b_i$ indexed by $\cB_i$, the pairs $x_i y_i$ indexed by $\cP_i$, contradicting the definition of BPE. This is because pairs $s[x:x+1]$ for $x \in \cP_i$ do not neighbor to any pair merged by BPE in the present or past rounds (by $\mathbf{I_2}$), and were not merged before (as otherwise $x$ would have been deleted). Moreover, as $\cP_i$ forms a pair packing, the pairs indexed by $\cP_i$ are not neighboring each other and could thus have been merged in one round of BPE. 

Thus, $|\cP_i| \leq |\cB_i|$ and $\cP \cap \cB_i = \emptyset$ (before the deletion). As we cannot delete entries of $\cB_i$ from $\cP$, we instead use our remaining budget of $|\cB_i|$ deletions to delete the entries of $\cP_i$, establishing $\mathbf{I_1}$.  

\medskip

After the $k$-th step, $\mathbf{I_1}$ implies that $\cP$ is empty, proving the lemma. 
\end{proof}

\section{Lower bound for compressed length}
\label{sec4}

In this section we show that for the dual optimization problem of OPE of minimizing compressed length (instead of maximizing compression utility), the approximation ratio of BPE is \emph{unbounded}, and in fact, linear in the input size. This proves Theorem~\ref{thm3}.
    
The construction is essentially the same as the example used in \cite[Thm.~3]{MRRepair} to prove a different property of Re-Pair.

Consider the string 
    \begin{align*} s = \prod_{i=1}^{t}x_i\mathtt{aa}y_i \cdot \prod_{i=1}^{t}|\,x_i\mathtt{a} \cdot \prod_{i=1}^{t}|\,\mathtt{a}y_i,
    \end{align*}
    where $t > 2$, each occurrence of $|$ is a distinct symbol, and $\mathtt{a}$, $x_i$, $y_i$ are symbols for $i \in [t]$. 
    
    If we merge every occurrence of $x_i\mathtt{a}$ and $\mathtt{a}y_i$, the resulting string has length $2t + 2t + 2t = 6t$ after $2t$ steps. So after $2t + (6t - 1) = 8t - 1$ total steps, we have reduced $s$ to a single symbol. 
    
    On the other hand, BPE will always start with $\mathtt{aa}$, since it occurs $t$ times, while every other pair occurs at most twice. After merging $\mathtt{aa}$, no pair occurs more than once in the string, so every subsequent step has a utility of one. After merging $\mathtt{aa}$, the length of the string is $3t+3t+3t = 9t$. So after $8t - 1$ total steps, the resulting string will be of length
        $9t - (8t-2) = t + 2$.
    %\end{align}
    Since $|s| = 10t$, it holds that $t+2 = \frac{|s|}{10} + 2 \in \Omega(|s|)$.

    Notice that since all merges were full, the approximation lower bound also holds for the OMS problem (in the compression length version). 
%\end{proof} 

\section{Approximating Optimal Pair Encoding}
\label{sec5}

In this section we give a simple algorithm that achieves a $0.5$-approximation for the OPE problem, proving Theorem~\ref{thm4}. 

For a pair $xy$, let $\freq_{xy}(s)$ denote the number of occurrences of $xy$ in $s$. Note that we allow overlaps, e.g., $\freq_{\mathtt{aa}}(\mathtt{aaa}) = 2$. Let $F_k(s)$ denote the total number of occurrences of the $k$ most frequent pairs in $s$ (breaking ties arbitrarily, as this does not affect the total value). Let $x_1 y_1, \dots, x_k y_k$ be such a set of most frequent pairs, and let $I=\{i_1, \dots, i_N\}$, where $N=F_k(s)$, denote the set of indices where they occur, i.e., $s[i_j:i_j+1] \in \{x_i y_i ~|~ i \in [k]\}$ for all $j \in [N]$. 

We can efficiently compute a set $I = \{i_1, \dots, i_N\}$ by enumerating all pairs occurring in $s$ and taking the indices of the $k$ most frequent of them. Assume that $i_1, \dots, i_N$ are sorted increasingly. We then sparsify $I$, by removing neighboring indices. More precisely, for each index of the form $i_{2\ell}$, remove $i_{2\ell}$ if $i_{2\ell} = i_{2\ell-1}+1$ or $i_{2\ell} = i_{2\ell+1}-1$. Let $I'$ denote the indices in $I$ remaining after this process. Observe that $I'$ has at least $N/2$ elements. As there are no neighboring indices in $I'$, the occurrences of pairs in $s$ indexed by $I'$ do not overlap, can therefore be merged simultaneously. Notice that only $k$ distinct pairs appear, so we obtain a valid OPE solution of utility $|I'|$. We call this the EvenOdd algorithm.  

Since $|I'| \geq F_k(s)/2$, and clearly, $F_k(s) \geq P_k(s) \geq \OPT(s,k)$, the theorem follows. 

\medskip

\begin{figure} 
\setcapindent{0em}
  \captionsetup{width=.92\linewidth}
\setlength{\jot}{9pt}% tweak
\begin{align*}
\mathtt{\underset{\uparrow}{a}b\underset{\uparrow}{c}d\,|\,\underset{\uparrow}{b}c\,|\,b\underset{\uparrow}{c}da\,|\,\underset{\uparrow}{c}d\,|\,\underset{\uparrow}{c}d\underset{\uparrow}{a}b\,|\,\underset{\uparrow}{d}a\,|\,d\underset{\uparrow}{a}bc\,|\,\underset{\uparrow}{a}b} & ~\rightarrow~  \mathtt{XY\,|\,Z\,|\,bYa\,|\,Y\,|\,YX\,|\,T\,|\,dXc\,|\,X}\\
\mathtt{\underset{\uparrow}{a}\underset{{\bcancel{\uparrow}}}{b}\underset{\uparrow}{c}d\,|\,\underset{\uparrow}{b}c\,|\,\underset{\uparrow}{b}\underset{{\bcancel{\uparrow}}}{c}\underset{\uparrow}{d}a\,|\,\underset{\uparrow}{c}d\,|\,\underset{\uparrow}{c}\underset{{\bcancel{\uparrow}}}{d}\underset{\uparrow}{a}b\,|\,\underset{\uparrow}{d}a\,|\,\underset{\uparrow}{d}\underset{{\bcancel{\uparrow}}}{a}\underset{\uparrow}{b}c\,|\,\underset{\uparrow}{a}b} & ~\rightarrow~  \mathtt{XY\,|\,Z\,|\,ZT\,|\,Y\,|\,YX\,|\,T\,|\,TZ\,|\,X}
\end{align*}
\caption{Encoding $s = \mathtt{abcd\,|\,bc\,|\,bcda\,|\,cd\,|\,cdab\,|\,da\,|\,dabc\,|\,ab}$, where $|$ denotes a distinct symbol for each occurrence. The BPE encoding (above) with utility $\bpe(s,4) = 10$ via the merge sequence $(\mathtt{ab \rightarrow X,~ cd \rightarrow Y,~ bc \rightarrow Z,~ da \rightarrow T})$. Arrows show the $k$-packing solution derived from the BPE merge sequence. The EvenOdd encoding (below) with $k=4$ and utility $12$. Arrows show the indices of most frequent pairs, with crossed ones removed.    \label{fig3} }
\end{figure}

On some inputs, EvenOdd can have larger utility than BPE. Indeed, consider the earlier example $s = \mathtt{abcd\,|\,bc\,|\,bcda\,|\,cd\,|\,cdab\,|\,da\,|\,dabc\,|\,ab}$. With $k=4$ merge rounds, BPE achieves utility $10$ due to the unfavorable tie-breaking, but as seen earlier, the maximum utility of \emph{any} OMS algorithm (regardless of tie-breaking) is at most $11$. EvenOdd takes the most frequent $4$ pairs $\{\mathtt{ab,bc,cd,da}\}$ with $16$ occurrences in $s$. After sparsification, $12$ non-neighboring pairs remain and are merged, matching the OPE optimum utility of $12$ on this instance; see Figure~\ref{fig3}.

We remark that the EvenOdd algorithm only merges input symbols (and not newly introduced symbols), and thus, the decoded string of each symbol is of length at most two. Note that, instead of removing even-ranked indices, we could, in polynomial time, find the minimal set of indices whose removal makes the remainder non-overlapping (by a simple greedy strategy). % \jv{I think finding the minimal set should be solvable using a greedy strategy. Let an ''island'' be a continuous interval that has no gaps and cannot be larger. From each island of size $n$, we have to remove $floor(n/2)$ indices. If not there would still be a overlapping pair. This can be done using a greedy strategy that removes every second index. Maybe i am mistaken}. 
In the worst case, however, no algorithm in this class can improve the approximation ratio $0.5$. 

Indeed, consider an input string $s$ consisting of $2n$ identical symbols, followed by $2(k-1)$ pairwise distinct symbols. Setting $k = \log_2{n}+1$, the optimal pair encoding of $s$ will collapse the first $2n$ symbols to one, with utility $2n-1$. An algorithm that can only merge input pairs can achieve a maximum utility of $n$ on the first part of the string (in one merge), and an additional $k-1$ on the last part, yielding an approximation ratio of $(n+\log_2{n})/(2n-1) = 0.5 + o(1)$.
In this example, the optimal encoding coincides with the one found by BPE, also showing a factor $2-o(1)$ gap between BPE and input-symbol-only algorithms, to the advantage of the former.

\section{Conclusion and open questions}\label{sec6}

In this paper we studied the complexity of optimal pair encoding: the task of compressing a string by replacing pairs with new symbols, such as to maximize the overall reduction in length. We showed that the problem is APX-complete and that BPE, a popular greedy heuristic, achieves a constant-factor approximation for this task. Our work can be seen as an initial theoretical investigation with a number of open questions remaining. We list those we find the most interesting. 

Finding the best approximation ratio for the OPE or OMS problems, by BPE or by other polynomial-time algorithms, i.e., closing the gaps between our bounds in Theorem~\ref{thm2}, resp., Theorems~\ref{thm1} and \ref{thm4} are the central remaining questions. In particular: is there an efficient algorithm for OPE with approximation ratio above $0.5$? Note that analyzing natural greedy strategies for some other string problems turned out to be very difficult. For instance, the famous \emph{greedy superstring conjecture} concerns a very intuitive string merging process, with worst-case approximation ratio conjectured to be $2$ but only proven to be at most $3$, e.g., see~\cite{Blum_string}. 

Our hardness result (Theorem~\ref{thm1}) relies on an alphabet whose size increases with the input. The complexity of both problems with a fixed alphabet remains open. In particular, an initial alphabet of size two may make the problem tractable. It is likely that stronger guarantees for BPE can also be shown in this case.

While the APX-hardness (Theorem~\ref{thm1}) extends to the compression length, the approximation guarantee (Theorem~\ref{thm2}) does not, as implied by Theorem~\ref{thm3}. The polynomial-time approximability of compression length (for OPE or OMS by any algorithm, with or without restrictions on the alphabet) is left open. In fact, as we lack a constant-factor approximation for this problem, we can only claim its APX-hardness, not APX-completeness. 

\newpage
\appendix 
\section{APX-hardness of OPE}
\label{appA}
In \S\,\ref{sec2} we showed that the OMS problem is APX-complete. We now show the same for the the more general OPE problem (where partial merges are also allowed).

We follow the exact same reduction as in \S\,\ref{sec2}, just adding a necessary extra step in the proof. Recall that given a cubic, undirected, unweighted graph $G$, we construct a string $s$ and an integer $k$. We now take $(s,k)$ to be an OPE instance. We claim that $G$ has a cut of size at least $c$ if and only if $\OPT(s,k) \geq 34n+c$, where $n$ is the number of vertices in $G$.

The forward direction is identical to the proof in \S\,\ref{sec2}, as the constructed OMS merge sequence also serves as a partial merge sequence for OPE. 

For the reverse direction, consider a partial merge sequence $\cR^*$ of length $k=n+1$, of utility $34n+c$ for $s$. Claim~\ref{claim_wf} showed that a merge sequence $\cR$ that is not \emph{well-formed} can be transformed into a well-formed one, while increasing its utility. We now extend the argument to a partial merge sequence $\cR^*$. We show that we can transform $\cR^*$ into a \emph{full merge sequence} (with no partial merges) that is well-formed, and the remainder of the proof in \S\,\ref{sec2} can go through unchanged. 

\begin{claim}
If $\cR^*$ is a partial merge sequence, then we can find (in polynomial time) a well-formed (full) merge sequence $\cR$ of the same length that achieves greater or equal utility on $s$. 
\end{claim}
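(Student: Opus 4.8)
The goal is to convert an arbitrary partial merge sequence $\cR^*$ of length $k = n+1$ into a full merge sequence of at least the same utility, after which Claim~\ref{claim_wf} takes over and we reach a well-formed full merge sequence. The natural approach is to process the partial merges $\cR_1^*, \dots, \cR_k^*$ one by one and argue that ``completing'' each one to a full merge of the corresponding pair does not decrease utility. The subtlety, of course, is that completing $\cR_i^*$ changes the string on which later partial merges act, so we cannot simply argue merge-by-merge in isolation; we need an exchange argument that is robust to the downstream effect.

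First I would set up the right potential. For a partial merge $\cR_i^* = \replace^*_{a_i b_i \to c_i}$ applied to the current string $u$, let $m$ be the number of occurrences of $a_i b_i$ that it leaves unmerged. The plan is to replace $\cR_i^*$ by the full merge $\replace_{a_i b_i \to c_i}$, which gains $m$ extra units of utility at step $i$, and then to show that the $m$ ``extra'' symbols $c_i$ we introduced cost us at most $m$ units over all later steps. The key observation is that each later partial/full merge in $\cR^*$ can be simulated on the new string: whenever a later rule would have matched a pair using one of the now-merged copies of $a_i b_i$, that pair instead straddles a $c_i$ symbol and cannot be matched there — but each such ``lost'' match corresponds to a distinct merged copy of $a_i b_i$, so at most $m$ matches are lost in total across all later steps. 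Hence the modified sequence, with the remaining later rules restricted to the matches that still occur, has utility at least (old utility) $+ m - m = $ (old utility). Iterating over $i = 1, \dots, k$ turns $\cR^*$ into a full merge sequence of at least the original utility; then Claim~\ref{claim_wf} converts it to a well-formed one, completing the reduction.

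A cleaner way to run the same argument, which I would prefer to present, is global rather than step-by-step: think of the utility of any (partial) merge sequence as the number of ``gluing'' indices of $s$ charged in the Lemma~\ref{lem_freq} sense, i.e., a $k$-packing derived from $\cR^*$. A partial merge sequence still yields such a charging, and making a merge full only ever \emph{adds} candidate gluing locations while the $k$-packing we already extracted from $\cR^*$ remains valid (the charged indices and their non-overlap are unaffected by performing additional replacements of the same pair, since those extra replacements happen at disjoint locations). So the extracted $k$-packing certifies a full merge sequence of the same length and at least the same utility: merge every occurrence of $a_i b_i$ at step $i$ in the order given. One must check that this full sequence actually realizes utility at least $|\cP|$ — but that follows because each charged index of $\cP_i$ is a surviving, non-overlapped occurrence of the pair $a_ib_i$ at the time step $i$ is reached, so a full merge captures all of them.

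The main obstacle is the bookkeeping in the downstream interaction: when we complete $\cR_i^*$ to a full merge, later rules $\cR_j^*$ ($j > i$) operate on a genuinely different string, and we must exhibit concrete replacement rules on that string whose total utility does not drop. The cleanest safeguard is that the new symbols $c_i$ are already (w.l.o.g.) fresh, so no later rule ever tries to \emph{match} a pair \emph{equal to} $c_i$-containing pairs it didn't match before — the only effect is the disappearance of some occurrences of $a_ib_i$, each chargeable to a distinct completed copy. Making this ``at most $m$ lost later'' bound fully rigorous (rather than hand-waving via the packing) is the part that needs care; the packing-based formulation sidesteps it by never re-simulating the later rules at all, which is why I would lead with that argument and relegate the step-by-step version to a remark.
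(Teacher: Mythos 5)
There is a genuine gap, and it is fatal to both versions of your argument: the general statement you are trying to prove -- that any partial merge sequence can be completed to a \emph{full} merge sequence of the same length with no loss of utility, on an arbitrary string -- is false. The paper's own Figure~\ref{fig2} exhibits a string with $\OPT(s,4)=12$ but $\OPT^{m}(s,4)=11$, so partial merges can be strictly stronger than full ones; no instance-independent ``completion'' argument followed by an appeal to Claim~\ref{claim_wf} can work. The bookkeeping error in your step-by-step version is the bound ``at most $m$ lost matches'': a single newly completed copy of $a_ib_i$ can destroy \emph{two} later matches, namely the pair formed with its left neighbor and the pair formed with its right neighbor (in $\mathtt{dabc}$, fully merging $\mathtt{ab}\to X$ gains $1$ but kills both a $\mathtt{da}$ and a $\mathtt{bc}$ occurrence), so the correct bound is $2m$ and the exchange does not break even. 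The packing-based version fails for the same underlying reason: the parenthetical claim that the extra replacements ``happen at disjoint locations'' from the charged indices is only true within the same step; an extra replacement at step $i$ can overlap an occurrence charged at a later step $j>i$, so the $k$-packing extracted from $\cR^*$ is \emph{not} realized by the sequence of full merges (indeed $P_k(s)\ge\OPT(s,k)\ge\OPT^m(s,k)$ with both inequalities possibly strict).

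The paper's proof avoids this by being entirely instance-specific: it exploits the structure of the reduction string $s$. Stray merges (those involving $\,|\,$ or new symbols) have utility at most $6$ and are swapped for full merges $\ell_q\#$ on untouched vertex symbols, worth at least $8$; duplicate merges are consolidated; $\#\#$ is moved to the end (re-ordering is free since all remaining merges involve only input symbols) and made full; the case where both $\#\ell_i$ and $\ell_i\#$ occur costs at most $10$ and is compensated by a fresh $\ell_q\#$ merge worth at least $11$; and finally the remaining merges are completed to full merges from the back, where the only possible interference is with the terminal $\#\#$ merge and is a utility-neutral exchange of one replacement for another. If you want to salvage your approach, you must likewise use the specific combinatorics of $s$ (occurrence counts of $\#\#$, $\ell_i\#$, $\#\ell_i$, and the padding $s_0$) to show that on \emph{this} string completing merges never loses more than it gains; a string-agnostic argument cannot exist.
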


\begin{proof}
As pairs other than $\#\#$ occur at most $14$ times in $s$, we assume $\cR^*$ (at least partially) merges $\#\#$, for otherwise the total utility would be at most $14(n+1) < 34n$, and any well-formed merge sequence $\cR$ would have larger utility. 

Substrings of $s$ containing $\,|\,$ occur at most once, and substrings of length $3$ occur at most $6$ times; it follows that any (partial) merge involving $\,|\,$ or a newly created symbol has utility at most $6$. Suppose there are $t$ such partial merges in $\cR^*$. Remove them from $\cR^*$, and notice that all remaining merges in $\cR^*$ can only involve pairs of the form $\#\#$, $\ell_i\#$, and $\#\ell_i$. 
Add to the end of $\cR^*$, $t$ full merges for pairs $\ell_i\#$, where no other merge involves $\ell_i$. (As there are $n+1$ merges initially in $\cR^*$, and $n$ symbols $\ell_i$, there must be $t$ such pairs.) 
Each added merge has utility at least $8$ ($2$ from each copy of $s_i$), so the total utility of $\cR^*$ has increased. 

We now only have merges of the form $\#\#$, $\ell_i\#$, $\#\ell_i$ in $\cR^*$, but some may be partial, and for some $\ell_i$, we may not have any of the two merges. If there are duplicate merges in $\cR^*$, i.e., more than one merge of the same pair, then we extend the first such merge to also perform the replacements of the other merges of the same pair. After this, we can remove the later duplicate merges (as they no longer have any utility), and replace them by some full merge $\ell_j \#$ where no other merge involves $\ell_j$. 

We then move $\#\#$ to be the last merge in $\cR^*$. Notice that a partial merge sequence where all merges involve only input symbols (as the case here) can be freely re-arranged without changing utility.  
After we have moved $\#\#$ to the end of $\cR^*$, we can turn it into a full merge, as this can only increase the utility of $\cR^*$.

We next fix the cases where both pairs $\# \ell_i$ and $\ell_i \#$ are (partially) merged in $\cR^*$, for some $i$. Suppose $\# \ell_i$ appears first (the other case is symmetric). We remove, one by one, all the replacements of $\ell_i \#$. In each copy of $s_i$ there can be two such replacements. The one in $\ell_i \# \ell_i$ can be compensated by an additional replacement of $\# \ell_i$, but the other one cannot, so we account for it as a loss. In total, there are at most $4$ such losses in $s_i$ strings. 
In $s_{ij}$ (or $s_{ji}$) strings there can be overall at most $6$ replacements of $\ell_i \#$, which we account as a loss. The total loss due to removing $\ell_i \#$ is thus at most $10$. Now we add instead of $\ell_i \#$ in $\cR^*$, a new merge $\ell_q \#$, where no other merge involves $\ell_q$. This new merge has utility at least $8$ from the $s_q$ strings, without affecting other merges. In each of the $3$ strings $s_{qq'}$ or $s_{q'q}$, where $q'$ is a vertex such that $(q, q')$ is an edge in $G$, we can find one copy of $\ell_q \#$ that does not overlap with $\# \#$ (or with any other pair that we merge), and we add these replacements to the partial merge. We thus get a utility of at least $11$, compensating the loss of $10$. After doing this for each pair, we end up with $\cR^*$ containing, for each $i$, exactly one of the merges $\# \ell_i$ and $\ell_i \#$. 

Finally, we proceed inductively through $\cR^*$ from the end to the beginning and turn each merge into a full merge. Suppose we are at the $i$-th merge step and for all $j>i$, the $j$-th merge is full. Suppose the $i$-th merge is for $\#\ell_q$ (the case $\ell_q\#$ is entirely symmetric, and thus omitted).
We turn $\#\ell_q$ into a full merge. Within a copy of $s_q$, this can only increase utility, as $\# \ell_q$ does not overlap with any other merged pair.
Within some $s_{qq'}$ or $s_{q'q}$, a new replacement of $\#\ell_q$ may only interfere with a later $\#\#$ merge, but in this case we just exchange one replacement by another, leaving the total utility unchanged. 

At the end of the process all merges are full and we have turned $\cR^*$ into a well-formed merge sequence that we call $\cR$.
\end{proof}

\newpage
\small
\bibliographystyle{alphaurl}
\bibliography{article}

\end{document}